\newtheorem{theorem}{Theorem}[section]
\newtheorem{lemma}[theorem]{Lemma}
\newtheorem{counter-example}[theorem]{Counter example}
\newtheorem{open question}[theorem]{Open question}
\providecommand{\algorithmname}{Algorithm}
\begin{document}

\begin{titlepage}

\author{
	Michael Schapira\thanks{School of Computer Science and Engineering, The Hebrew University of Jerusalem, Israel.} \hspace{1cm} 
	Gal Shahaf\thanks{Dept. of Mathematics, The Hebrew University of Jerusalem, Israel}
}

\title{Oblivious Routing via Random Walks}
\date{}

\maketitle

\begin{abstract}	
We present novel oblivious routing algorithms for both splittable and unsplittable multicommodity flow. Our algorithm for minimizing congestion for \emph{unsplittable} multicommodity flow is the
first oblivious routing algorithm for this setting. As an intermediate step towards this algorithm, we present a novel generalization of Valiant's classical load balancing scheme for packet-switched networks to arbitrary graphs, which is of independent interest. Our algorithm for minimizing congestion for \emph{splittable} multicommodity flow improves upon the state-of-the-art, in terms of both running time and performance, for graphs that exhibit good expansion guarantees. Our algorithms rely on diffusing traffic via iterative applications of the random walk operator. Consequently, the performance guarantees of our algorithms are derived from the convergence of the random walk operator to the stationary distribution and are expressed in terms of the spectral gap of the graph (which dominates the mixing time).
\end{abstract}
\end{titlepage}

\section{Introduction}\label{sec: intro}

Oblivious routing is the task of routing traffic in a manner that is agnostic to the traffic demands. Due to its inherent robustness to uncertainty about traffic conditions, varying traffic patterns, and inaccurate traffic measurements, the design of oblivious routing algorithms is of great theoretical interest and practical importance~\cite{zhang2005designing,mckeown2009software,xu1994iterative,greenberg2008towards,greenberg2009vl2,applegate2003making}. Consequently, a rich body of algorithmic literature is focused on devising oblivious routing schemes with provably high performance guarantees. Surprisingly, as first established by Valiant~\cite{valiant1981universal,valiant1982scheme}, and later by R{\"a}cke~\cite{racke2002minimizing,racke2008optimal}, oblivious routing can provide good approximations to the performance of the optimal demands-aware routing.

We revisit this long-standing algorithmic challenge and present oblivious routing algorithms for two well-studied settings: (1) minimizing congestion for \emph{splittable} multicommodity flow (s-MCF), and (2) minimizing congestion for \emph{unsplittable} multicommodity flow (u-MCF). Our algorithm for u-MCF is the first oblivious routing scheme for this setting. As an intermediate step towards this algorithm, we present an novel algorithm for minimizing delay in Valiant's classical model for routing permutation traffic demands in packet-switched networks. This algorithm extends the classical Valiant Load Balancing scheme, designed for the hypercube, to arbitrary graphs. Our results for s-MCF improve upon today's state-of-the-art, in terms of both performance and running time, for graphs that exhibit good expansion guarantees. We next discuss our high-level approach and then delve into our results for each of the models.

\noindent{\bf High-level approach.} Our approach to oblivious routing reflects a simple high-level idea: utilizing the efficient distribution of traffic across the network by the random walk operator. Intuitively, our algorithms first disperse the traffic from the source uniformly across the network and then regather the traffic at the destination. The realization of this simple intuition, however, requires care, and is different across the three settings. Leveraging random walks for routing can be traced back to the work of Broder et al.~\cite{broder1994existence,broder1997static} on finding edge-disjoint paths and establishing virtual circuits and to studied of the minimization of congestion and load balancing in this context~\cite{akyildiz2000new,tian2005randomwalk,spyropoulos2005spray,servetto2002constrained}. In contrast to these studies, which compute routing outcomes for \emph{input} traffic demands, our focus is on traffic-\emph{oblivious} routing.

Under our algorithm for the s-MCF model, traffic dispersion from the source is obtained via iterative applications of the random walk operator. The key challenge lies in guaranteeing that all flow eventually reach the destination. To accomplish this, a series of stochastic operators is applied to ``invert'' each step of a random walk that starts at the destination. Our algorithms for the u-MCF model and Valiant's model generate a ``sample space'' of paths by sampling multiple fixed-length random walks from each vertex. Then, a \emph{single} path from each source $s$ to destination $t$ is computed by randomly selecting a vertex $v$ and concatenating two paths: (1) a randomly selected path from the sample space connecting $s$ to $v$, and (2) a randomly selected path from the sample space connecting $v$ to $t$. Careful analyses show that this routing scheme results in provably low congestion in both models.

Our analyses heavily rely on the convergence of the random walk to the stationary distribution. Consequently, the performance guarantees and the running times of our algorithms are expressed in terms of the spectral gap of the graph (which dominates the mixing time).

Before presenting our results, we first introduce the following notation: We consider undirected and capacitated graphs $G=(V,E,c)$ of size $n=|V|$. The \emph{degree} of each vertex $x\in V$ is defined to be $d_x:=\sum_{(x,y)\in E} c(x,y)$. Let $\pi(G)\in \mathbb{R}^n$ be the \emph{stationary distribution}: $\pi_x=\frac{d_x}{\sum_{y\in V} d_y}$. We denote by $d_{min}$ and $d_{max}$ the minimal and maximal degrees, respectively, and by $\pi_{min}$ and $\pi_{max}$ the minimal and maximal value of a vertex in $\pi$, respectively. We say that $G$ is \textit{$d$-regular} if $d_x=d$ for all $x\in V$, and $c(e)=1$ for all $e\in E$. Let $\lambda(G)$ be the generalized second eigenvalue of the random walk operator of $G$ (see Appendix \ref{app: RW prelim} for formal definitions), and let $\bar{\lambda}(G):=\min \{\lambda(G),\lambda(G')\}$, where $G'$ is the graph obtained from $G$ via the addition of self loops. When clear from the context we will refer to $\pi(G)$, and $\bar{\lambda}(G)$ simply as $\pi$ and $\bar{\lambda}$. 





\noindent{\bf Results for unsplittable multicommodity flow.} In the unsplittable multicommodity flow setting, the routing of each commodity must use a single path in the network~\cite{chakrabarti2007approximation,erlebach2006approximation,alvelos2003comparing,kolliopoulos1997improved}. We present the first (to the best of our knowledge) oblivious routing algorithm for this context. We prove the following upper bound on the \emph{performance ratio}~\cite{racke2002minimizing,racke2008optimal} of this algorithm, i.e., the maximum ratio between the congestion (max link utilization) induced by the algorithm and the congestion under the optimal \emph{demands-aware} routing across all possible traffic demands.

\begin{theorem}\label{thm: unsplit}
	Oblivious routing on any undirected capacitated and connected graph $G$ is achievable with performance ratio at most $O\left(d_{max} \cdot \log^2 n + \log n \cdot \log_{\bar{\lambda}} \frac{\pi_{min}}{2} \right)$ with probability $1-\frac{1}{n}$.
\end{theorem}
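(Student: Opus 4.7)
I would combine Valiant-style indirection through a random intermediate with a precomputed sample of random walks, then control the realized congestion by first bounding the expected edge load via mixing and then applying Chernoff concentration for unsplittability.

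\emph{Algorithm.} Set the mixing length $\ell \asymp \log_{\bar{\lambda}}(2/\pi_{min})$, chosen so that after $\ell$ steps every random walk is pointwise within factor $2$ of $\pi$ from any starting vertex; using $\bar{\lambda}$ rather than $\lambda$ accounts for bipartite-type periodicity via the self-loop graph $G'$. For each $x \in V$, draw a sample $\mathcal{W}$ of $M = \mathrm{poly}(n)$ independent length-$\ell$ walks from $x$. A routine counting argument shows that, with probability at least $1 - 1/n^2$, for every $s$ and every $v$ with $\pi_v \geq \pi_{min}$ the sample contains a walk from $s$ ending at $v$. To route commodity $(s,t)$: draw $v \sim \pi$, then concatenate a uniformly chosen walk in $\mathcal{W}$ from $s$ to $v$ with the reverse of a uniformly chosen walk in $\mathcal{W}$ from $t$ to $v$.

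\emph{Expected load.} Normalize so that the optimal demand-aware congestion $\mathrm{OPT}(D) = 1$. By the standard edge-occupancy identity for $e = (x,y)$, an $\ell$-step walk from $s$ crosses $e$ in expectation $\sum_{k < \ell} P^k(s,x)\, p(x,y)$, which after mixing equals $\ell \cdot \pi_x p(x,y) = \ell \cdot c(e)/\mathrm{vol}(G)$ up to a factor of $2$. Conditioning on the endpoint distorts each crossing probability by a bounded factor, since $P^\ell(s,v)/\pi_v \in [1/2, 3/2]$ at length $\ell$, and averaging over $v \sim \pi$ preserves the bound. Summing over commodities and invoking the $O(\log n)$ flow-cut gap of undirected multicommodity flow (which controls $\sum_{(s,t)} D(s,t)$ against $\mathrm{vol}(G) \cdot \mathrm{OPT}$) yields an expected per-edge congestion of $O(\ell \log n)$.

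\emph{Concentration.} Under unsplittable routing, the load on each edge is a sum of independent $\{0, D(s,t)\}$-valued indicators across commodities. Because capacity-$1$ OPT routing forces $D(s,t) \leq d_{max}$ (each demand must fit under its minimum-capacity OPT edge), the summand sizes are bounded by $d_{max}$, so a Chernoff bound yields deviation $O(d_{max} \log n)$ above the expectation with probability at least $1 - 1/n^3$ per edge. A union bound over the at most $n^2$ edges costs only probability $1/n$ and contributes an additional $\log n$ factor, producing the claimed ratio $O\!\left(d_{max} \log^2 n + \log n \cdot \log_{\bar{\lambda}}(\pi_{min}/2)\right)$. The principal technical obstacle is the conditional edge-occupancy estimate in the expected-load step: one must show that conditioning the random walk on an arbitrary endpoint distorts each crossing probability by only a constant factor, which is where the uniform mixing bound and the self-loop correction implicit in $\bar{\lambda}$ are indispensable.
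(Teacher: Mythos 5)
Your algorithm and your expectation analysis match the paper's: the same sample space of length-$\ell$ walks, the same Valiant-style indirection through $v \sim \pi$, and the same observation that conditioning a mixed walk on its endpoint distorts crossing probabilities only by a constant (this is exactly the content of the paper's Lemmas \ref{lem: similar expectations} and \ref{lem: e vs. e'}). The skeleton "bound the expected edge load, then apply Chernoff and a union bound over edges" is also the right one. But the concentration step has a genuine gap. You assert that the load on an edge is a sum of \emph{independent} $\{0, D(s,t)\}$-valued variables across commodities. For the algorithm as you (and the paper) describe it, this is false: all commodities with the same source $s$ select their first half-path from the \emph{same} finite pool of $m\pi_s$ precomputed walks, so their crossing indicators are correlated through the realized sample space (a single atypical pool at $s$ simultaneously inflates the conditional crossing probability for every commodity leaving $s$, and the total demand leaving $s$ can be as large as $d_s \cdot \mathrm{OPT}$). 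A one-shot Chernoff bound therefore does not apply. The paper's fix is precisely the step your proof is missing: sort each row of $D$, decompose the demand matrix into $n$ groups $\{D_x^{(t)}\}_{x\in V}$ containing \emph{one commodity per source}, note that within each group the paths are independent (so the permutation-model concentration lemma applies), bound the group weights by $M$ for $t\le s$ and by $Ms/t$ for $t>s$, and union-bound over the $n$ groups. The harmonic sum $\sum_{t>s} 1/t$ arising from this decomposition is the actual source of one of the $\log n$ factors in the theorem; it is not an artifact of the union bound over edges, as your accounting suggests. Without some such device (or an explicit two-stage argument that first concentrates the conditional crossing probabilities over the sample-space randomness), the proof does not close.

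Two smaller points. First, the invocation of the $O(\log n)$ flow--cut gap to control $\sum_{s,t} D(s,t)$ is both unnecessary and not the right tool: the bound $\sum_t D(s,t) \le d_s \cdot \mathrm{OPT}(D)$ follows directly from the fact that all flow out of $s$ must cross the edges incident to $s$, and summing over $s$ gives $\sum_{s,t} D(s,t) \le \mathrm{vol}(G)\cdot \mathrm{OPT}(D)$ with no flow--cut machinery. Second, your justification that $D(s,t) \le d_{\max}$ "because each demand must fit under its minimum-capacity OPT edge" is incorrect reasoning --- the splittable optimum may spread one commodity over many edges --- though the conclusion itself is correct, again via the degree bound at the source. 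You should also state the reduction from capacitated to unit-capacity graphs explicitly, as the paper does, since $d_{\max}$ and the per-edge Chernoff bound are applied to the uncapacitated graph.
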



We note that, somewhat surprisingly, for expander graphs, the resulting $O(\log^2 n)$ upper bound matches the performance of the state-of-the-art \emph{demands-aware} algorithms~\cite{chakrabarti2007approximation}. En route to establishing the above result for u-MCF, we generalize the classical Valiant Load Balancing (VLB) scheme, previously applied to hypercubes, and other highly-structured graphs, to general graphs.

\noindent{\bf Results for Valiant's packet-switching model.} Valiant and Brebner~\cite{valiant1981universal,valiant1982scheme} considered routing on the hypercube network under permutation traffic demands, i.e., when each vertex wishes to send a single packet to a single, distinct, other vertex. Under this routing model, only a single packet can traverse each link simultaneously and the goal is to minimize delay, i.e., the total time for all sent packets to arrive at their destinations. The main result in~\cite{valiant1981universal} can be phrased as follows:

\begin{theorem}
    Oblivious routing of permutation traffic demands on the hypercube is achievable with at most $O(\log n)$ delay with probability $1-\frac{1}{n}$.
\end{theorem}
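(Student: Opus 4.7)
The plan is to implement the two-phase Valiant Load Balancing scheme. For each packet with source $s$ and destination $t$, the algorithm picks an intermediate vertex $v\in\{0,1\}^{\log n}$ uniformly at random, then routes $s\to v$ in Phase 1 and $v\to t$ in Phase 2 using bit-fixing (greedy) routing on the hypercube, which corrects the bits of the current address in order from the most significant to the least significant.

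First I would bound the delay incurred by a fixed packet $p$ in Phase 1. A classical property of bit-fixing routes is that once another packet leaves the path of $p$, its bit-fixing prefix can never bring it back onto any later edge of $p$'s path. This implies that the total queueing delay of $p$ over all of Phase 1 is bounded by the random variable $X_p$ counting the number of other packets whose Phase 1 route shares at least one edge with that of $p$. Exploiting the uniform choice of intermediate destinations, a direct expectation computation --- summing over the edges of $p$'s path the expected number of other packets that traverse each such edge --- yields $\mathbb{E}[X_p] = O(\log n)$.

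The key step is a concentration argument: one needs $\Pr[X_p > c\log n] \le n^{-2}$ for a suitable constant $c$. Writing $X_p$ as a sum of indicators, one per other packet, the crucial observation is that \emph{conditioned on the route of $p$}, these indicators depend only on the independent intermediate destinations of the other packets and are therefore mutually independent. A standard Chernoff bound then provides the desired exponential tail, and a union bound over all $n$ packets shows that no packet experiences more than $O(\log n)$ queueing delay in Phase 1 with probability at least $1-\tfrac{1}{2n}$.

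Phase 2 is structurally identical to a time-reversed instance of Phase 1: each packet travels from a uniformly random start $v$ to a fixed end $t$. The same analysis applies symmetrically, giving the same guarantee for Phase 2. Adding the contributions --- each phase contributes path length $\log n$ plus $O(\log n)$ queueing delay --- yields the claimed $O(\log n)$ total delay with probability $1-\tfrac{1}{n}$. The main obstacle is the concentration step: naive second-moment bounds give only polynomial-probability guarantees, and overcoming this requires the conditioning trick above, which restores independence across the edge-intersection indicators and thereby licenses the Chernoff bound.
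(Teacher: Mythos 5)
Your proposal is correct, but it is worth noting that the paper does not actually prove this statement: it is quoted verbatim as background, attributed to Valiant and Brebner, and your argument is precisely their classical one (two phases through a uniformly random intermediate vertex, bit-fixing routes, the ``once a packet leaves your route it never returns'' property, delay bounded by the number of intersecting routes, expectation $O(\log n)$ by an edge-counting/symmetry argument, Chernoff via the conditional independence of the intermediate destinations, and a union bound). The only place where you compress a genuinely nontrivial step is the claim that the total queueing delay of $p$ is bounded by $|S|$, the number of other packets whose routes meet $\rho_p$; this does not follow immediately from the ``never returns'' property alone but requires the standard lag/charging argument in which each unit of delay suffered by $p$ is charged to a distinct packet of $S$ at the moment it leaves $\rho_p$. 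The contrast with the paper's own machinery is instructive: the paper's generalization (Algorithm 2 and Theorem 1.3) replaces bit-fixing by a sample space of length-$k$ random walks, where $k$ is governed by the mixing time, and its congestion analysis replaces the ``never returns'' lemma by the fact that a stationary-started random walk crosses every edge with equal probability. Applied to the hypercube this yields only $O(\log^3 n)$, because the hypercube mixes slowly ($k = \Theta(\log n \log\log n)$-ish rather than path length $\Theta(\log n)$), whereas your structure-specific argument recovers the tight $O(\log n)$. So your approach buys optimality on the hypercube at the cost of using bit-fixing, which has no analogue on arbitrary graphs; the paper's approach buys generality at the cost of a mixing-time overhead.
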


VLB is a classical, and widely applied, routing scheme~\cite{rabin1989efficient,mckeown2009software,xu1994iterative,tavakoli2009applying,greenberg2008towards,greenberg2009vl2,zhang2005designing}. Adaptations of VLB have been devised for \emph{specific}, well-structured graphs (see, e.g.,~\cite{upfal1992log,aleliunas1982randomized}). Importantly, all these schemes heavily utilize the structure of the graph in selecting paths (e.g., bit-fixing in the hypercube) and, consequently, how VLB can be generalized to \emph{arbitrary} graphs is not obvious. Our main result for this context generalizes VLB to general \emph{regular} graphs.

\begin{theorem} \label{thm: VLB on regular graphs}
	Oblivious routing of permutation traffic demands on any connected $d$-regular graph $G$ is achievable in at most $O\left(\log_{\bar{\lambda}} \frac{1}{2n}\cdot \log n + \frac{1}{d} \log^2_{\bar{\lambda}} \frac{1}{2n}\right)$ time with probability $1-\frac{1}{n}$.
\end{theorem}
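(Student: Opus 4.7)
The plan is to instantiate the random-walk sample-space scheme sketched in the introduction, bound the path length (dilation $D$) and the maximum edge load (congestion $C$) separately, and then combine them via the standard $O(C\cdot D)$ upper bound on delay for store-and-forward packet routing.

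Concretely, set $L := \log_{\bar{\lambda}}\frac{1}{2n}$, so that by standard spectral-gap mixing the distribution of a length-$L$ walk from any vertex is within $\frac{1}{2n}$ in total variation of the stationary (uniform, since $G$ is $d$-regular) distribution. From each vertex independently sample $k = \Theta(n\log n)$ random walks of length $L$; these constitute the sample space. A union bound gives that, with probability $1-1/\mathrm{poly}(n)$, for every ordered pair $(x,y)$ the number of sampled walks starting at $x$ and ending at $y$ is $\Theta(\log n)$; call this event $\mathcal{E}$. The path for packet $(s,t)$ is then produced by drawing a uniform intermediate $v\in V$, drawing uniformly a sampled walk $\omega_1$ from $s$ to $v$ and a sampled walk $\omega_2$ from $t$ to $v$, and concatenating $\omega_1$ with the reversal of $\omega_2$ (reversibility of the walk on a regular graph legitimizes reversing $\omega_2$).

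The dilation bound is immediate, $D\le 2L$. For congestion, fix an edge $e$. Conditioned on $\mathcal{E}$, the empirical distribution on sampled walks from each $x$ approximates the true length-$L$ walk distribution up to a $(1\pm o(1))$ factor, so the probability that a single packet's path uses $e$ is at most a constant times the probability that a length-$L$ walk from a uniform start traverses $e$; on a $d$-regular graph this is $O(L/(nd))$. Since the intermediate vertices and walk selections are independent across the $n$ packets given the sample, the load on $e$ is a sum of independent indicators with mean $O(L/d)$, and a Chernoff bound plus a union bound over the $nd/2$ edges yields $C = O(\log n + L/d)$ with probability $1-1/n$.

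The main technical obstacle I anticipate is tightly relating the empirical conditional distribution (``uniform over sampled walks from $s$ ending at $v$'') to the idealized conditional random-walk distribution used in the congestion calculation; a priori the empirical quantity could be badly skewed, especially for the pairs $(x,y)$ with smallest counts. The plan is to leverage the $1/(2n)$ total-variation slack in the definition of $L$ together with the $\Theta(\log n)$ lower bound on samples per pair guaranteed by $\mathcal{E}$ to sandwich the empirical distribution within a $(1\pm o(1))$ factor of the ideal one, after which the independence across packets makes the Chernoff step routine. Multiplying the resulting $C$ by $D\le 2L$ finally yields delay $O(L\log n + L^2/d)$, matching the theorem, and the failure probability is absorbed into the $1/n$ slack by tuning the constants in $k$.
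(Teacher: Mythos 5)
Your overall architecture coincides with the paper's: the same sample space of $\Theta(n\log n)$ length-$L$ walks per vertex with $L=\log_{\bar\lambda}\frac{1}{2n}$, the same random-intermediate concatenation (your reversal of $\omega_2$ is equivalent to the paper's symmetric definition of $B_{x,y}$), the same per-edge expectation $\Theta(L/d)$ obtained from the fact that a stationary-start walk traverses every edge with equal probability, and the same Chernoff-plus-union-bound congestion estimate $C=O(\log n + L/d)$ multiplied by dilation $2L$. The final expression matches.

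However, the step you flag as your ``main technical obstacle'' is resolved incorrectly as proposed. You want to condition on the sample being good and then sandwich the empirical conditional edge-traversal probability $\hat p_{x,y,e}=\frac{1}{|B_{x,y}|}\sum_{\omega\in B_{x,y}}\mathbf{1}_{e\in\omega}$ within a $(1\pm o(1))$ factor of the ideal $\Pr(e\in\gamma\mid\gamma:x\to y)$. This is quantitatively impossible: the ideal probability is typically of order $L/(nd)$ or smaller, and multiplicative concentration of an empirical frequency of that magnitude requires $\Omega\left(\frac{nd\log n}{L}\right)$ samples per pair, whereas the event $\mathcal{E}$ only guarantees $\Theta(\log n)$ samples in each $B_{x,y}$; for most pairs $(x,y)$ and edges $e$ the empirical count is simply $0$ or $1$, nowhere near a $(1\pm o(1))$ estimate. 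The paper sidesteps this entirely by never conditioning on the realized sample in the congestion calculation: a uniformly selected element of $B_{x,y}$ is, in law (marginalizing over the sampling), \emph{exactly} a true length-$L$ random walk from $x$ conditioned to end at $y$, so the unconditional per-packet traversal probability is $\sum_y \pi_y\Pr(e\in\gamma\mid\gamma:x\to y)$; the only approximation needed is replacing the weights $\pi_y$ by the true endpoint distribution $(e_xA^L)_y$, which the mixing-time choice of $L$ controls within a factor of $3/2$. The $\alpha_x$ across packets draw on disjoint randomness, so Chernoff applies to $W_e$ directly with the unconditional mean; the sample-quality event is needed only to guarantee $B_{x,y}\neq\emptyset$ so the algorithm is well defined. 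If you replace your empirical-approximation step with this exchangeability observation, the rest of your argument goes through as written.
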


Theorem~\ref{thm: VLB on regular graphs} implies time bounds of $O(\log^2 n)$ for expander graphs and $O(\log^3 n)$ for the hypercube. The gap between the latter expression and the original guarantee of $O(\log n)$ in~\cite{valiant1981universal} is due to the relatively long mixing time in the hypercube.

\noindent{\bf Results for splittable multicommodity flow.} The study of oblivious routing under the s-MCF model was initiated by R{\"a}cke~\cite{racke2002minimizing}, who later presented an algorithm with performance ratio $O(\log n)$~\cite{racke2008optimal} for this setting, which is asymptotically tight for general graphs~\cite{bartal1997line,maggs1997exploiting,hajiaghayi2006new}.

Our main result for this setting is the following:

\begin{theorem} \label{thm: split_intro}
	Oblivious routing on any capacitated, undirected and connected graph $G$ is achievable with performance ratio at most $12 \cdot \log_{\bar{\lambda}} \frac{\pi_{min}}{2}$.
\end{theorem}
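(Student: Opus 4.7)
The plan is to construct an oblivious routing scheme in which each unit of demand from $s$ to $t$ is transported in two phases: a \emph{dispersion} phase using $T$ iterations of the random walk operator $W$ starting from $s$, followed by a \emph{gathering} phase that inverts a $T$-step random walk terminating at $t$. I would set $T := \lceil \log_{\bar{\lambda}}(\pi_{min}/2) \rceil$ so that, by the standard mixing-time bound, $W^T \mathbf{1}_s$ lies within total variation distance $\pi_{min}/2$ of the stationary distribution $\pi$ for every source $s$. The dispersion phase sends, at each step, a $W_{x,y}=c(x,y)/d_x$ fraction of the mass at $x$ along edge $(x,y)$; the gathering phase applies the time-reversal of $T$ random walk steps starting at $t$, which defines a valid sequence of stochastic operators because $W$ is reversible with respect to $\pi$ on an undirected graph.

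Concatenating dispersion from $s$ with gathering to $t$ yields almost a unit $s$-$t$ flow, except that the vertex distribution at the seam is $W^T \mathbf{1}_s$ on one side and $W^T \mathbf{1}_t$ on the other. I would close this small discrepancy with an auxiliary correction flow whose total mass is bounded by $\|W^T \mathbf{1}_s - W^T \mathbf{1}_t\|_1 \le \pi_{min}$, a quantity that is routable with negligible cost. The next step is to compute, for an arbitrary demand matrix $D=(d_{s,t})$, the total load this scheme places on each edge $(x,y)$: the contribution of step $i$ of dispersion is $\sum_{s,t} d_{s,t}\,(W^i)_{s,x}\cdot c(x,y)/d_x$, and reversibility $\pi_s (W^i)_{s,x} = \pi_x (W^i)_{x,s}$ lets me rewrite the inner sum as a demand-weighted traffic term that scales linearly with $c(x,y)$.

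Comparing the induced congestion to $\mathrm{OPT}(D)$ is the heart of the argument. Summing over the $T$ dispersion steps and the symmetric $T$ gathering steps gives a load on $(x,y)$ proportional to $T \cdot c(x,y)$ times quantities that any routing of $D$ must also support, since the standard cut-based lower bounds on $\mathrm{OPT}(D)$ apply edgewise. Matching these inequalities and absorbing the seam correction produces the stated factor $12 \cdot \log_{\bar{\lambda}}(\pi_{min}/2)$: a factor of $T$ from each of the two phases, with small universal constants accounting for the correction flow and the symmetrization between dispersion and gathering.

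The main obstacle I anticipate is the clean realization of the ``inversion'' of the random walk so that dispersion and gathering compose into a true unit $s$-$t$ flow, together with the bookkeeping needed to show that every intermediate random-walk step contributes congestion proportional to $c(e)$ with a $D$-independent constant. Reversibility must be used repeatedly to swap the role of source and edge endpoint, and $T$ must be chosen just large enough that the seam discrepancy is controlled in $\ell_1$ by $\pi_{min}$ rather than by a more delicate spectral quantity. Once this is handled, the construction is manifestly oblivious in $D$, so the same routing scheme witnesses the claimed performance ratio universally.
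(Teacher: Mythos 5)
Your overall architecture --- disperse from $s$ for $k=\log_{\bar{\lambda}}\frac{\pi_{min}}{2}$ random-walk steps, then regather at $t$ by ``inverting'' a random walk that starts at $t$ --- is the paper's architecture, and your per-step congestion accounting via reversibility is essentially the paper's maximum-principle argument (the per-step random-walk congestion is maximized at the first step and equals $\max_x \sum_z D_{xz}/d_x$, which lower-bounds $OPT(D)$). The gap is at the seam. You apply the time-reversal of the walk from $t$ to $W^T\mathbf{1}_t$ and patch the mismatch with $W^T\mathbf{1}_s$ by a correction flow of mass $\|W^T\mathbf{1}_s - W^T\mathbf{1}_t\|_1 \le \pi_{min}$. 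That bound does not follow from the choice of $T$: the mixing guarantee at $T=\log_{\bar{\lambda}}\frac{\pi_{min}}{2}$ is \emph{entrywise} (each coordinate is within $\pi_{min}/2$ of $\pi$), so the $\ell_1$ discrepancy is only bounded by $n\pi_{min}$, which equals $1$ for regular graphs --- the ``correction'' can carry the entire unit demand. And even if the residual mass were small, routing it obliviously with congestion comparable to $OPT(D)$ (which may itself be tiny) is exactly the problem you set out to solve, so it cannot be waved off as negligible.

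The paper closes the seam with no correction flow at all, and this is the idea your proposal is missing. The reversal operators $M^{(k+s)}=\text{row-norm}\left[\left(e_jA^{k-s}*A\right)^T\right]$ are genuine right-stochastic, graph-respecting operators satisfying $e_jA^{k-s+1}M^{(k+s)}=e_jA^{k-s}$; one applies them to the \emph{actual} distribution $e_iA^k$ (not to $e_jA^k$) and proves by induction, starting from the entrywise mixing bound $e_iA^k\le 3\,e_jA^k$, that every subsequent vector stays dominated by $3\,e_jA^{k-s}$. Since the operators are stochastic, the total mass remains $1$, and a nonnegative vector of mass $1$ dominated by $3e_j$ must equal $e_j$ exactly --- so the concatenation is a true unit $s$--$t$ flow with no residual. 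This domination factor of $3$ is also where the constant comes from: $2k$ steps, each edge charged in both directions, times $3$, gives $12k$. Without this (or an equivalent exact-termination argument), your construction does not yield a unit flow, and the claimed ratio does not follow.
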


In terms of performance, our algorithm matches the state-of-the-art for expander graphs, yielding an $O(\log n)$ approximation. Importantly, as the performance of the algorithm improves with the spectral gap, novel \textit{constant bounds} are established w.r.t. dense graphs (i.e., of size $n=d^{\alpha}$ for some constant $\alpha\geq 1$) with substantially large spectral gap, which have received much attention in the computer networking and parallel computing realms~\cite{besta2014slim,kim2008technology} . E.g., since a random regular graph satisfies $\lambda = \Theta(d^{-0.5})$ w.h.p.~\cite{friedman2003proof}, applying Theorem \ref{thm: split_intro} to random dense graphs yields, w.h.p., a performance ratio of $\Theta(\log_{\bar{\lambda}} \frac{\pi_{min}}{2}) = \Theta(\log_{\sqrt{d}} d^\alpha) = \Theta(2\alpha)$. Concrete (deterministic) constructions of such graphs include complete graphs and polarity graphs~\cite{brown1966graphs,erdos1962problem} with resulting performance ratios of $12$ and $48$ respectively. 

In terms of running time, computing the routing solution involves $O(n^4 \log_{\bar{\lambda}} \frac{\pi_{min}}{2})$ operations, e.g., $O(n^4 \log n)$ for expanders. We point out that this bound is close to the optimal running time as the output representation for this problem is of size $n^2\cdot |E|$. Consequently, our random-walk-based algorithm improves the running time over the state-of-the-art which is either LP-based~\cite{azar2003optimal} or relies on hierarchical graph decompositions~\cite{racke2002minimizing,racke2005distributed,racke2008optimal}.


\noindent{\bf Organization.} We begin with Section~\ref{sec: split}, where we introduce our results for the s-MCF model. Section~\ref{sec: permutation model} presents our results for Valiant's model, which are then leveraged in Section \ref{sec: unsplittable flow} to devise a routing scheme for the u-MCF model. We conclude with intriguing open questions in Section~\ref{sec: open questions}.


\section{Oblivious Routing of Splittable Multicommodity Flow}\label{sec: split}


\noindent\textbf{The model.} A \emph{demand matrix} is a non-negative matrix $D$, where its $(i,j)$'th entry, $D_{ij}$, specifies the amount of flow that vertex $i$ wishes to send to vertex $j$. A \textit{splittable multicommodity flow} $f=(f_{ij})_{i,j\subset V\times V}$ is a collection of functions $f_{ij}:E\rightarrow\mathbb{R}$ such that for every two vertices $i,j\in V$, the corresponding function $f_{ij}$ is a flow from $i$ to $j$. Namely, it specifies the traffic from $i$ to $j$ that traverses each edge $e\in E$, and must satisfy the standard flow conservation constraints\footnote{(1) $f(x,y) = -f(y,x)$, (2) $\sum_{i\sim y} f_{ij}(i,y) = \sum_{y\sim j} f_{ij}(y,j)$ and (3) $\sum_{x\sim y} f_{ij}(x,y) = 0$ for $x\neq i,j$.}. A \emph{routing policy} for $G$ is a multicommodity flow $r=(r_{ij})_{i\neq j\in V}$ such that each $r_{ij}$ is a unit flow\footnote{i.e.,
$\sum_{i\sim y} r_{ij}(i,y) = \sum_{y\sim j} r_{ij}(y,j) = 1$.}. A routing policy $r$ and demand matrix $D$ induce a flow $f^*=(f^*_{ij})_{i\neq j\in V}$ such that $f^*_{ij}=D_{ij}\times r_{ij}$. Observe that under $f^*$, all demands $d_{ij}$ are satisfied, though edge capacities might be exceeded. 

We present the following definitions of edge congestion and global congestion w.r.t. a routing policy $r$ and demand matrix $D$:
$$\text{EDGE-CONG}_{D,r}(e):=\frac{\sum_{i,j} D_{ij}\cdot |r_{ij}(e)|}{c(e)}, \>\>\>\> \text{CONG}(D,r):=\max_{e\in E} \text{EDGE-CONG}_{D,r}(e)$$

The \textit{oblivious ratio} of a routing policy $r$ is $PERF(r):= \sup_D \frac{\text{CONG}(D,r)}{OPT(D)}$, where $OPT(D)$ refers to the optimal congestion across all possible (splittable) flows.

\noindent\textbf{The algorithm.} We present a deterministic oblivious routing scheme for arbitrary demand matrices and capacitated undirected graphs. The scheme is specified in Algorithm \ref{alg: split}, where: (1) $A$ is the random walk matrix of $G$, (2) the \textit{point-wise multiplication} of a vector $v$ and a matrix $M$ is the $n\times n$ matrix 
$(v*M)_{xy} := v_x \cdot M_{xy}$, (3) $M^T$ is the transpose of $M$, and (4) the \textit{row normalization} of $M$ is given by:
$\text{row-norm}(M)_{xy}:= \left\{\begin{matrix} 
\frac{M_{xy}}{\sum_z M_{xz}} & \mbox{if } \sum_z M_{xz} \neq 0  \\ 
\frac{1}{d_x}\textbf{1}_{(x,y)\in E} & \mbox{otherwise }  \end{matrix}\right.
$.

\begin{algorithm}[th]\caption{Oblivious routing scheme for splittable MCF} \label{alg: split}
	\begin{algorithmic}[1]
		\STATE {\bf Input:} An undirected graph $G=(V,E,c)$.
		\STATE {\bf Output:} Oblivious routing policy $r=(r_{ij})_{i\neq j\in V}$.
		
		\STATE{\textbf{set} $G' = G$ + self loops}
		\IF{$\lambda(G')<\lambda(G)$}
            \STATE{\textbf{set} $G\leftarrow G'$} 
        \ENDIF
        \STATE{\textbf{set} $\bar{\lambda} = \lambda(G), A=A(G), \pi = \pi(G)$}
        \STATE{\textbf{set} $k=\log_{\bar{\lambda}} \frac{\pi_{min}}{2}$}
        
		\FOR{\texttt{$i\neq j\in V$}}
		\STATE \textbf{set} $v_{ij}^{(0)} = e_i$
		\FOR{\texttt{$1\leq s \leq k$}}
		\STATE $M_{ij}^{(s)} = A$ 
		\STATE $v_{ij}^{(s)} = e_i A^s$
		\ENDFOR
	    \FOR{\texttt{$1\leq s \leq k$}}
		\STATE $M_{ij}^{(k+s)} =  \text{row-norm}\left[\left(e_j A^{k-s} * A\right)^T\right]$
		\STATE $v_{ij}^{(k+s)} = v_{ij}^{(k+s-1)} M_{ij}^{(k+s)}$
		\ENDFOR
		
		\STATE \textbf{set} $r_{ij} = \sum_{s=1}^{2k} \left[ \left(v_{ij}^{(s-1)}*M_{ij}^{(s)}\right) - \left(v_{ij}^{(s-1)}*M_{ij}^{(s)}\right)^T \right]$
		\ENDFOR
				
	\end{algorithmic}
\end{algorithm}

\noindent\textbf{Proof overview.} Our high level approach to the analysis of Algorithm \ref{alg: split} relies on a \textit{sequential routing scheme} that routes the data from $i$ to $j$ across the graph in discrete time steps, $1\leq s\leq 2k$. The first phase of this scheme, where $1\leq s\leq k$, is starting at vertex $i$ and applying $k$ random walk operations. We shall show that the second phase, where $k+1\leq s\leq 2k$, is dominated by an ``inverse'' of a random walk that begins in $j$. The significance of this correspondence is twofold: First, it implies that $r_{ij}$ is a unit-flow from $i$ to $j$. Furthermore, it allows us to upper bound the congestion induced by $r$ by the congestion induced by the random walk, which we relate to the optimal possible congestion. The performance guarantee follows.

\noindent\textbf{Proof of Theorem~\ref{thm: split_intro}.} We defer some of the proofs of this section to Appendix~\ref{app: proof split}. All vectors are assumed to be row vectors in $\mathbb{R}^n$ (using the standard convention w.r.t. Markov chains). 

Recall that $\bar{\lambda}(G)=1$ iff $G$ is either disconnected or bipartite. As the first case does not hold by the assumption of the theorem, and the latter is eliminated by the addition of self loops in the first few lines of Algorithm \ref{alg: split}, we conclude that $\bar{\lambda}<1$, hence $k=\log_{\bar{\lambda}} \frac{\pi_{min}}{2}<\infty$. The sequential routing scheme that lies at the heart of Algorithm \ref{alg: split} is composed of $2k$ linear operators, $M_{ij}^{(1)}, ..., M_{ij}^{(2k)}$, where $M_{ij}^{(s)}$ specifies the distribution of data from each vertex to its neighbours. The vectors $v_{ij}^{(0)}, ..., v_{ij}^{(2k)}$, in turn, represent the data distribution in each time step. Namely, $v_{ij}^{(s)}(x)$ corresponds to the amount of data sent from $i$ to $j$ that is stored at $x\in V$ after $s$ time steps. While the definition of the first $k$ operators is straightforward, with $M_{ij}^{(s)} := A$ for all $1\leq s\leq k$, the choice of the latter operators requires explanation, and in fact reflects the aim of ``inverting the random walk''. Why not simply use $A^{-1}$? Well, first of all, the random walk matrix need not be invertible (for sufficient conditions on the invertibility of adjacency matrices, see, e.g., \cite{mcleman2014graph}, \cite{sciriha2007characterization}). More importantly, we shall require additional properties from our operators, that $A^{-1}$, even when it does exist, does not necessarily posses: (1) A matrix $M\in M_n(\mathbb{R})$ is said to \textit{respect a graph} $G=(V,E)$ if $M_{xy}\neq 0$ only if $(x,y)\in E$. (2) A matrix is said to be \textit{right stochastic} if each of its entries is non-negative with each row summing to 1.

The following Lemma justifies the choice of $M_{ij}^{(k+s)} =  \text{row-norm}\left[\left(e_j A^{k-s} * A\right)^T\right]$ :

\begin{lemma} \label{lem: rev_matrix}
Let $A\in M_n(\mathbb{R})$ be a random walk matrix of an undirected capacitated graph $G$, then, for any non-negative vector $v\in \mathbb{R}^n$, the operator $M:=\text{row-norm}\left[\left(v*A\right)^T\right]$ is right stochastic, respects $G$ and satisfies $vAM =v$.
\end{lemma}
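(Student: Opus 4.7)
The plan is to first write $M$ out explicitly in a convenient closed form, then check each of the three assertions in turn, saving the identity $vAM = v$ for last since it is the only one with any real content.

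Unfolding the definition gives $(v*A)^T_{xy} = v_y A_{yx}$, so row-normalization divides the $x$-th row of this matrix by $\sum_z v_z A_{zx} = (vA)_x$, yielding
\[
M_{xy} \;=\; \frac{v_y A_{yx}}{(vA)_x}
\]
whenever $(vA)_x \neq 0$. From this formula the first two properties drop out immediately: non-negativity of $M$ follows from $v, A \geq 0$, the $x$-th row sums to $(vA)_x/(vA)_x = 1$, and the fallback row $\frac{1}{d_x}\mathbf{1}_{(x,y)\in E}$ used when $(vA)_x = 0$ is itself a probability distribution supported on $E$. The fact that $M$ respects $G$ uses the undirectedness of $G$: $M_{xy} \neq 0$ forces either $A_{yx} \neq 0$ or (in the fallback) $(x,y) \in E$, and for the random walk matrix of an undirected graph $A_{yx} \neq 0 \Leftrightarrow (y,x) \in E \Leftrightarrow (x,y) \in E$.

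The substantive step is $vAM = v$. I would compute
\[
(vAM)_y \;=\; \sum_x (vA)_x \cdot M_{xy},
\]
and observe the cancellation in the generic case: if $(vA)_x > 0$, then $(vA)_x \cdot M_{xy} = v_y A_{yx}$ by construction. Summing these terms over $x$ and using that $A$ is row-stochastic yields $\sum_x v_y A_{yx} = v_y$, as desired.

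The only subtle point — and the one I expect to be the main (mild) obstacle — is the degenerate case $(vA)_x = 0$, where $M_{xy}$ is defined by the fallback rule rather than the cancellation formula and the equality $(vA)_x M_{xy} = v_y A_{yx}$ is not tautological. The rescue is that $(vA)_x = \sum_z v_z A_{zx} = 0$ combined with termwise non-negativity forces $v_z A_{zx} = 0$ for every $z$, and in particular $v_y A_{yx} = 0$; hence both sides of the desired identity vanish, and the cancellation argument runs uniformly over all $x$.
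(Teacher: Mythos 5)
Your proof is correct and follows essentially the same route as the paper's: both reduce to the entrywise identity $(vA)_x M_{xy} = v_y A_{yx}$ (which the paper phrases as $(vA*M)=(v*A)^T$ and interprets as flow returning along the same edge) and then sum over $x$ using row-stochasticity of $A$. Your explicit treatment of the degenerate rows with $(vA)_x = 0$ is a welcome addition that the paper glosses over.
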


Applying Lemma \ref{lem: rev_matrix} for $v = e_j A^{k-s+1}$, we have

\begin{equation}\label{eq:M = rev(v,A)}
e_j A^{k-s+1} M_{ij}^{(k+s)} = e_j A^{k-s}
\end{equation}

This fact allows us to bound the distributions $v_{ij}^{(k+s)}$:

\begin{lemma} \label{lem: v < 3e_j}
	The vectors $v_{ij}^{(k+1)}, v_{ij}^{(k+2)},..., v_{ij}^{(2k)}$ obtained by Algorithm \ref{alg: split}, \newline satisfy $v_{ij}^{(k+s)}\leq 3e_j A^{k-s}$ for all $0\leq s \leq k$.
\end{lemma}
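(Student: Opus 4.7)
The statement is an entrywise bound, so the plan is to proceed by induction on $s$, from $s=0$ up to $s=k$, and to leverage two facts already in place: (i) the choice $k = \log_{\bar{\lambda}}\frac{\pi_{\min}}{2}$ makes $k$ steps of the random walk bring the distribution close to stationary, and (ii) equation (\ref{eq:M = rev(v,A)}) turns multiplication by $M_{ij}^{(k+s)}$ into a ``one-step inverse'' of $A$ when applied to $e_j A^{k-s+1}$.

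\textbf{Base case} ($s=0$). Here $v_{ij}^{(k)} = e_i A^k$, and I need $e_i A^k \le 3\, e_j A^k$ entrywise. This is where the choice of $k$ enters: using the standard spectral bound for reversible chains,
\begin{equation*}
\left| \frac{(e_x A^k)_y}{\pi_y} - 1 \right| \;\le\; \frac{\bar{\lambda}^k}{\sqrt{\pi_x \pi_y}} \;\le\; \frac{\bar{\lambda}^k}{\pi_{\min}} \;=\; \tfrac{1}{2},
\end{equation*}
for every $x,y\in V$, since $\bar{\lambda}^k = \pi_{\min}/2$ by construction. Consequently $(e_x A^k)_y \in [\tfrac{1}{2}\pi_y,\, \tfrac{3}{2}\pi_y]$ for all $x,y$, and in particular $(e_i A^k)_y \le \tfrac{3}{2}\pi_y \le 3\cdot (e_j A^k)_y$, establishing the base case.

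\textbf{Inductive step.} Assume $v_{ij}^{(k+s-1)} \le 3\, e_j A^{k-s+1}$ entrywise. By Lemma \ref{lem: rev_matrix}, the operator $M_{ij}^{(k+s)}$ is right stochastic, hence entrywise non-negative. Applying it on the right preserves entrywise inequalities, so
\begin{equation*}
v_{ij}^{(k+s)} \;=\; v_{ij}^{(k+s-1)} M_{ij}^{(k+s)} \;\le\; 3\, e_j A^{k-s+1} M_{ij}^{(k+s)} \;=\; 3\, e_j A^{k-s},
\end{equation*}
where the last equality is exactly (\ref{eq:M = rev(v,A)}) applied with $v = e_j A^{k-s}$. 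This closes the induction.

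\textbf{Main obstacle.} The only nontrivial ingredient is the base case: the constant $3$ is exactly what one can afford, so the spectral estimate must be used with care to produce the factor $\tfrac{1}{2}$ on both sides (upper bound on $e_i A^k$ and lower bound on $e_j A^k$). Everything else is essentially bookkeeping: once the base case is secured, the inductive step is immediate from right-stochasticity of the $M_{ij}^{(k+s)}$ combined with the ``inversion'' identity (\ref{eq:M = rev(v,A)}).
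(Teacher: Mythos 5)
Your proof is correct and follows essentially the same route as the paper's: the base case via the mixing estimate at time $k$ (your relative pointwise bound and the paper's $\|e_xA^k-\pi\|\le\bar\lambda^k=\pi_{\min}/2$ yield the same entrywise sandwich $\tfrac{1}{2}\pi_y\le(e_xA^k)_y\le\tfrac{3}{2}\pi_y$), and the inductive step via non-negativity of $M_{ij}^{(k+s)}$ together with the inversion identity. The paper phrases the induction with an explicit non-negative slack vector $w$ rather than invoking monotonicity of right-multiplication by a non-negative matrix, but the two formulations are equivalent.
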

 
Where we use the notation $v\leq v'$ to indicate that $v_x \leq v'_x$ in every entry $1\leq x\leq n$.
\begin{proof}(of Lemma \ref{lem: v < 3e_j})
	The choice of  $k = \log_{\bar{\lambda}} \frac{\pi_{min}}{2}$ implies that every probability vector $v$ satisfies
	
	\begin{equation}\label{eq: mixing bounds v}
	\left\| v A^k - \pi \right\|\leq \bar{\lambda}^k = \frac{\pi_{min}}{2}.
	\end{equation}
		
	In particular, every entry of the probability vectors $e_i A^k, e_j A^k$ satisfies
	
	\begin{equation}\label{eq: mixing bounds e_i}
	\frac{\pi_x}{2}\leq \pi_x - \frac{\pi_{min}}{2}\leq (e_i A^k)_x, (e_j A^k)_x \leq \pi_x + \frac{\pi_{min}}{2} \leq \frac{3\pi_x}{2}
	\end{equation}
		
	This immediately implies that $v^{(k)}_{ij} = e_i A^k\leq 3e_j A^k$. We continue inductively: Assuming the existence of a non-negative vector $w$, with $v_{ij}^{(k+s-1)} + w = 3e_j A^{k-s+1}$, we have
	
	\begin{equation*}
	\begin{split}
	v_{ij}^{(k+s)} +w M_{ij}^{(k+s)} 
	& = v_{ij}^{(k+s-1)}M_{ij}^{(k+s)} +w M_{ij}^{(k+s)} \\
	& = \left[v_{ij}^{(k+s-1)} +w\right] M_{ij}^{(k+s)} = 3e_j A^{k-s+1} M_{ij}^{(k+s)} = 3e_j A^{k-s}
	\end{split}
	\end{equation*}
	
	where the first equality stems from the definition of $v_{ij}^{(k+s)}$, and the last from equation \ref{eq:M = rev(v,A)}. Since both $w$ and $M_{ij}^{(k+s)}$ consist of non-negative entries, we conclude that $v_{ij}^{(k+s)}\leq 3e_j A^{k-s}$.
\end{proof}

By Lemma \ref{lem: rev_matrix}, the operators $M_{ij}^{(1)}, M_{ij}^{(2)},..., M_{ij}^{(2k)}$ are right stochastic, hence the sum of entries of $v_{ij}^{(2k)} := e_i M_{ij}^{(1)} M_{ij}^{(2)},... M_{ij}^{(2k)}$ must equal $1$. Since Lemma \ref{lem: v < 3e_j} implies that $v_{ij}^{(2k)} \leq 3e_j$, it follows that $v_{ij}^{(2k)} = e_j$. We now show that the resulting function $r_{ij}$ is indeed a unit-flow from $i$ to $j$.

\begin{lemma}\label{lem: r_ij is a flow}
	The matrix $r_{ij}$ determines a unit-flow on $G$ from $i$ to $j$. 
\end{lemma}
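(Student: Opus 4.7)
The plan is to interpret the expression $\left(v_{ij}^{(s-1)}*M_{ij}^{(s)}\right)_{xy}$ as the amount of flow pushed from $x$ to $y$ during time step $s$ of the sequential routing scheme described in the overview, so that the matrix $\left(v_{ij}^{(s-1)}*M_{ij}^{(s)}\right) - \left(v_{ij}^{(s-1)}*M_{ij}^{(s)}\right)^T$ is the signed (net) edge flow in step $s$ and $r_{ij}$ is the total net flow summed over all $2k$ steps. Under this reading, three properties must be verified: (i) $r_{ij}$ is antisymmetric; (ii) $r_{ij}$ is supported on the edges of $G$; and (iii) flow conservation holds at every vertex, with a net out-flow of $+1$ at $i$ and $-1$ at $j$.

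Property (i) is immediate, since every summand has the form $X - X^{T}$. For (ii), $M_{ij}^{(s)} = A$ respects $G$ for $1 \le s \le k$, and Lemma~\ref{lem: rev_matrix} guarantees that $M_{ij}^{(k+s)}$ respects $G$ for $1 \le s \le k$ as well; hence $v_{ij}^{(s-1)}*M_{ij}^{(s)}$ is supported on pairs $(x,y)\in E$, and so is its antisymmetrization.

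For (iii), the main computation is the step-wise net out-flow at a fixed vertex $x$:
\begin{equation*}
\sum_y \Bigl[(v_{ij}^{(s-1)}*M_{ij}^{(s)})_{xy} - (v_{ij}^{(s-1)}*M_{ij}^{(s)})_{yx}\Bigr]
= v_{ij}^{(s-1)}(x)\sum_y M_{ij}^{(s)}(x,y) \;-\; (v_{ij}^{(s-1)} M_{ij}^{(s)})(x).
\end{equation*}
Since every $M_{ij}^{(s)}$ is right stochastic (trivially for $s\le k$, and by Lemma~\ref{lem: rev_matrix} for $s > k$), the first term collapses to $v_{ij}^{(s-1)}(x)$ while the second equals $v_{ij}^{(s)}(x)$. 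Summing this telescope over $s=1,\dots,2k$ yields $v_{ij}^{(0)}(x) - v_{ij}^{(2k)}(x) = e_i(x) - e_j(x)$, using the identity $v_{ij}^{(2k)} = e_j$ already established in the paragraph preceding the lemma. The right-hand side is $+1$ at $x=i$, $-1$ at $x=j$, and $0$ otherwise, which is exactly the unit-flow condition.

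The genuinely nontrivial ingredients — that the second-phase operators are both right stochastic and edge-respecting, and that the terminal distribution $v_{ij}^{(2k)}$ in fact equals $e_j$ — have already been discharged by Lemmas~\ref{lem: rev_matrix} and~\ref{lem: v < 3e_j} together with the right-stochasticity argument that forces $v_{ij}^{(2k)}$, which is bounded above by $3e_j$ yet has entries summing to $1$, to be exactly $e_j$. Once those are in hand, the lemma itself reduces to the telescoping bookkeeping above; no further obstacle is expected.
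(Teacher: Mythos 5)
Your proposal is correct and follows essentially the same route as the paper's proof: antisymmetry and edge-support from the form of the summands and Lemma~\ref{lem: rev_matrix}, and the conservation condition via the same telescoping identity (your per-vertex computation is exactly the paper's vector identities $\textbf{1}_n(v*M)=vM$ and $(v*M)\textbf{1}_n^T=v$ for right-stochastic $M$), resting on the previously established fact $v_{ij}^{(2k)}=e_j$. No gap.
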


In order to analyze the congestion incurred by $r=\left(r_{ij}\right)_{i,j\in V}$ w.r.t. $D$, we first define the \textit{sequential traffic}, and the \textit{sequential congestion} at time step $s$ by

\begin{equation*}
TRAF_D^{(s)} := \sum_{ij} \left(D_{ij}\cdot v_{ij}^{(s-1)}\right)*M_{ij}^{(s)};\;\;\;\; CONG_D^{(s)}(x,y) := \frac{TRAF_{D}^{(s)}(x,y)}{c(x,y)}
 \end{equation*}

Namely, $TRAF_D^{(s)}, CONG_D^{(s)}\in M_n(\mathbb{R})$ where their $(x,y)$'th entry correspond to the traffic/congestion incurred by the sequential routing scheme at time step $s$ on the edge $(x,y)\in E$. We shall compare these matrices with the sequential traffic and congestion obtained by repeated iterations of the random walk operator $A$ over some initial distribution $v\in\mathbb{R}^n$. We define these parameters by $\textit{RW-}TRAF_v^{(s)} := vA^{s-1}*A;\;\; \textit{RW-}CONG_v^{(s)}(x,y) := \frac{\textit{RW-}TRAF_v^{(s)}(x,y)}{c(x,y)}$.

The following lemma establishes the relation between the congestion in both processes, and an upper bound in terms of the demand matrix $D$:

\begin{lemma}\label{lem: seq vs. RW cong}
We have the following:
    $$CONG_D^{(s)}(x,y) \leq \max_{(x,y)\in E, \; 1\leq s\leq k} \textit{RW-}CONG_{\textbf{1}_n D^T}^{(s)} (x,y) \leq \max_x \left\{\frac{\sum_z D_{xz}}{d_x}\right\}_{x\in V}$$

    $$CONG_D^{(k+s)}(x,y) \leq 3\cdot \max_{(x,y)\in E, \; 1\leq s\leq k} \textit{RW-}CONG_{\textbf{1}_n D}^{(k+s)} (y,x) \leq 3\cdot \max_x \left\{\frac{\sum_z D_{zx}}{d_x}\right\}_{x\in V}$$
\end{lemma}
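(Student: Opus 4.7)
The plan is to split each of the two chains into (i) a pointwise identification of $TRAF_D^{(\cdot)}$ with a (scaled) $\textit{RW-}TRAF$ quantity and (ii) a single general fact about the weighted random walk operator that yields the final bound.

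The general fact is the following monotonicity property of $A$: for any non-negative $u\in\mathbb{R}^n$,
\[
(uA)(x) = \sum_k u_k\,\frac{c(k,x)}{d_k} \leq \Bigl(\max_k \frac{u_k}{d_k}\Bigr)\cdot d_x,
\]
so $\max_x (uA^m)(x)/d_x$ is non-increasing in $m$. Since $A(x,y)/c(x,y) = 1/d_x$, we have $\textit{RW-}CONG_u^{(m)}(x,y) = (uA^{m-1})(x)/d_x$, and iterating the inequality above gives $\max_{(x,y),\,m}\textit{RW-}CONG_u^{(m)}(x,y) \leq \max_x u_x/d_x$.

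For the first chain ($1\leq s\leq k$), the construction gives $v_{ij}^{(s-1)} = e_i A^{s-1}$ and $M_{ij}^{(s)} = A$, so a direct summation yields
\[
TRAF_D^{(s)}(x,y) = A(x,y)\sum_i\Bigl(\sum_j D_{ij}\Bigr)(A^{s-1})(i,x) = A(x,y)\,(\mathbf{1}_n D^T A^{s-1})(x) = \textit{RW-}TRAF_{\mathbf{1}_n D^T}^{(s)}(x,y).
\]
Dividing by $c(x,y)$ yields equality in the first inequality, and the general fact applied to $u = \mathbf{1}_n D^T$ (whose $x$-entry is $\sum_z D_{xz}$) gives the second.

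For the second chain ($s\in[1,k]$, time index $k+s$), set $w_j := e_j A^{k-s}$. The row-normalized operator unpacks to $M_{ij}^{(k+s)}(x,y) = w_j(y)\,A(y,x)/(w_j A)(x)$, while Lemma~\ref{lem: v < 3e_j} gives $v_{ij}^{(k+s-1)}(x)\leq 3(w_j A)(x)$. The denominator of $M_{ij}^{(k+s)}(x,y)$ cancels exactly against this bound on $v_{ij}^{(k+s-1)}(x)$, leaving
\[
TRAF_D^{(k+s)}(x,y) \leq 3A(y,x)\sum_{i,j} D_{ij}\,w_j(y) = 3A(y,x)\,(\mathbf{1}_n D A^{k-s})(y).
\]
Dividing by $c(x,y)=c(y,x)$ (using undirectedness) produces a pointwise bound of the form $3\cdot\textit{RW-}CONG_{\mathbf{1}_n D}^{(m)}(y,x)$ for an appropriate $m\leq k$, and the general fact applied to $u=\mathbf{1}_n D$ (whose $x$-entry is $\sum_z D_{zx}$) yields the final bound. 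The chief obstacle is executing the denominator cancellation cleanly while keeping track of the transposition $x\leftrightarrow y$ and the $3\cdot$ blow-up, which together distinguish this chain from the first; once the cancellation is in place, the monotonicity fact finishes both chains uniformly.
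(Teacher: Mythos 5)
Your proof is correct and follows the same two-step structure as the paper's: a pointwise identification of $CONG_D^{(\cdot)}$ with (scaled) random-walk traffic, followed by the observation that the random-walk congestion $\max_x (uA^{m})(x)/d_x$ is maximized at the first step. The only difference is cosmetic: you obtain the latter by iterating the direct inequality $(uA)(x)\le \bigl(\max_k u_k/d_k\bigr)\,d_x$, whereas the paper argues the same monotonicity via a flow-conservation (maximum-principle) argument; you also write out the pointwise identifications and the denominator cancellation against the Lemma~\ref{lem: v < 3e_j} bound, which the paper's proof merely asserts.
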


We now relate these quantities to the optimal (total, non-sequential) congestion: Given a demand matrix $D$, any flow on $G$ is required to deliver $\sum_z D_{xz}$ amount of data from $x$ to its neighbors. Since an even distribution of the congestion between the edges adjacent to $x$ yields a congestion of $\frac{\sum_z D_{xz}}{d_x}$ per edge, it follows that this amount of congestion is inevitable under any routing scheme. Similarly, any proper routing should deliver $\sum_z D_{zx}$ to $x$ from its neighbors. It follows that at least $\frac{\sum_z D_{zx}}{d_x}$ congestion is incurred. We conclude that $\max_x \left\{\frac{\sum_x D_{xz}}{d_x}, \frac{\sum_x D_{zx}}{d_x}\right\} \leq OPT(D)$.

Applying the above inequality and Lemma \ref{lem: seq vs. RW cong}, we asserr that $CONG_D^{(s)}(x,y) \leq 3 OPT(D)$ for every $1\leq s\leq 2k$. We are now able to complete the proof using the fact that $r$ is induced by the sequential routing scheme:
\begin{lemma}\label{lem: cong < OPT}
    $CONG_{D,r} (x,y)\leq 12\left(\log_{\bar{\lambda}} \frac{\pi_{min}}{2}\right)OPT(D)$
\end{lemma}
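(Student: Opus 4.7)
\textbf{Proof plan for Lemma \ref{lem: cong < OPT}.} The plan is to unfold the definition of $r_{ij}$, push the absolute value inside the sum, and then match each time-step contribution to the sequential congestion that Lemma \ref{lem: seq vs. RW cong} already controls. Starting from
\[
r_{ij}(x,y) \;=\; \sum_{s=1}^{2k}\Bigl[\bigl(v_{ij}^{(s-1)}*M_{ij}^{(s)}\bigr)(x,y)\;-\;\bigl(v_{ij}^{(s-1)}*M_{ij}^{(s)}\bigr)(y,x)\Bigr],
\]
I would apply the triangle inequality together with the non-negativity of the entries of $v_{ij}^{(s-1)}*M_{ij}^{(s)}$ (which holds because every $v_{ij}^{(s-1)}$ is non-negative and every $M_{ij}^{(s)}$ is either $A$ or produced by \text{row-norm}) to obtain
\[
|r_{ij}(x,y)| \;\le\; \sum_{s=1}^{2k}\Bigl[\bigl(v_{ij}^{(s-1)}*M_{ij}^{(s)}\bigr)(x,y)\;+\;\bigl(v_{ij}^{(s-1)}*M_{ij}^{(s)}\bigr)(y,x)\Bigr].
\]

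Next, I would multiply this inequality by $D_{ij}/c(x,y)$ and sum over all ordered pairs $(i,j)$. Recognizing the inner sums as the sequential traffic matrices $TRAF_D^{(s)}$ divided by $c(x,y)$, this gives
\[
CONG_{D,r}(x,y) \;\le\; \sum_{s=1}^{2k}\Bigl[CONG_D^{(s)}(x,y)\;+\;CONG_D^{(s)}(y,x)\Bigr].
\]
Thus the whole bound reduces to controlling $CONG_D^{(s)}$, in both directions, for every $1\le s \le 2k$.

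At this point I would invoke Lemma \ref{lem: seq vs. RW cong}. Combined with the two ``inevitable-load'' lower bounds stated just before the statement of Lemma \ref{lem: cong < OPT}, namely
\[
\max_{x\in V}\Bigl\{\tfrac{\sum_z D_{xz}}{d_x},\;\tfrac{\sum_z D_{zx}}{d_x}\Bigr\} \;\le\; OPT(D),
\]
Lemma \ref{lem: seq vs. RW cong} yields $CONG_D^{(s)}(x,y) \le 3\cdot OPT(D)$ uniformly for every $1\le s\le 2k$ and every oriented edge $(x,y)$ (the factor $3$ comes from the second phase, and dominates the first-phase bound of $OPT(D)$). Substituting this uniform bound into the previous display gives $2k$ terms of size at most $3\cdot OPT(D)$ for the $(x,y)$ direction and the same for the $(y,x)$ direction, so
\[
CONG_{D,r}(x,y) \;\le\; 2\cdot 2k \cdot 3\cdot OPT(D) \;=\; 12\,k\cdot OPT(D) \;=\; 12\,\log_{\bar\lambda}\tfrac{\pi_{min}}{2}\cdot OPT(D),
\]
which is the desired inequality.

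The only delicate step is the passage from the signed, antisymmetric flow expression $r_{ij}(x,y)$ to the unsigned per-step traffic: one needs the coordinate-wise non-negativity of $v_{ij}^{(s-1)}*M_{ij}^{(s)}$ in order to replace $|\alpha-\beta|$ by $\alpha+\beta$, and one needs to avoid double-counting by keeping the $(x,y)$ and $(y,x)$ contributions separate until the very end when Lemma \ref{lem: seq vs. RW cong} is applied in both orientations. Once this accounting is set up cleanly, the remaining work is routine arithmetic.
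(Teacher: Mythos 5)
Your proposal is correct and follows essentially the same route as the paper's proof: unfold $r_{ij}$, use the triangle inequality and the non-negativity of $v_{ij}^{(s-1)}*M_{ij}^{(s)}$ to bound $|r_{ij}(x,y)|$ by the sum of per-step traffic in both orientations, identify these sums with $TRAF_D^{(s)}$ and hence $CONG_D^{(s)}$, and finish with the uniform bound $CONG_D^{(s)}\le 3\,OPT(D)$ from Lemma \ref{lem: seq vs. RW cong} to obtain $2\cdot 2k\cdot 3\cdot OPT(D)=12k\cdot OPT(D)$. No gaps; the accounting of the $(x,y)$ and $(y,x)$ contributions matches the paper exactly.
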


The analysis is tight, as it is easy to construct a demand matrix for which the output of Algorithm \ref{alg: split} satisfies $PERF(r) = \Omega\left(\log_{\bar{\lambda}} \frac{\pi_{min}}{2}\right)$ (See Appendix \ref{app: proof split} for elaboration).




\section{Valiant Load Balancing on Arbitrary Graphs}\label{sec: permutation model}

\begin{algorithm}[th]\caption{VLB on Arbitrary Graphs} \label{alg: VLB}
	\begin{algorithmic}[1]
		\STATE {\bf Input:} An undirected capacitated graph $G=(V,E,c)$.
		\STATE {\bf Output:} A set of paths $\{\gamma_{xy}\}_{x\neq y\in V}$, where $\gamma_{xy}$ is a path from $x$ to $y$.
		
		\STATE{\textbf{set} $G' = G$ + self loops}
		\IF{$\lambda(G')<\lambda(G)$}
            \STATE{\textbf{set} $G\leftarrow G'$} 
        \ENDIF
        \STATE{\textbf{set} $\bar{\lambda} = \lambda(G), A=A(G), \pi = \pi(G)$}
        \STATE{\textbf{set} $k=\log_{\bar{\lambda}} \frac{\pi_{min}}{2}$, $m=\frac{24\log n}{\pi^2_{min}}$}
        
        \STATE{\textbf{set}} $B_{xy}=\emptyset$ for all $x,y\in V$
        		
        \FOR{all $x\in V$}
        \STATE start $m\pi_x$ random walks of length $k$ from $x$
        \IF{the resulting path terminates in $y\in V$}
        \STATE{store the path in $B_{xy}$}
        \ENDIF
        \ENDFOR
        
		\FOR{\texttt{$x\neq y\in V$}}
		\STATE choose a random vertex $r(x,y)$ w.r.t. the stationary distribution: $Pr[r(x,y)=z]=\pi_z$
		\STATE choose $\alpha\in B_{x,r(x,y)}$ uniformly at random.
		\STATE choose $\beta\in B_{r(x,y),y}$ uniformly at random.
		\STATE{set $\gamma_{xy}=\alpha*\beta$}
		\ENDFOR
	\end{algorithmic}
\end{algorithm}

\noindent\textbf{The model.} Under Valiant's classical model~\cite{valiant1981universal,valiant1982scheme} for the routing of packets in networks, each vertex in $G$, aims to send a single (unsplittable) \emph{packet} to a single other vertex in $G$. Specifically, the communicating pairs are determined by a permutation $\sigma:[n]\rightarrow[n]$ such that vertex $i\in[n]$ wishes to send a packet to vertex $\sigma(i)$. Routing on this graph is a discrete-time process; the transmission of a packet across an edge takes a single time step, and packets can traverse an edge only one at a time. 

A \emph{routing policy} $r$ in this model is a set of paths $(\gamma_{ij})_{i\neq j\in V}$. A routing policy $r$ is \emph{oblivious} if $r$ does not depend on the permutation demands $\sigma$. Observe that every routing policy $r$ and permutation demands $\sigma$ induce a flow of packets in the network in which each packet from $i$ to $j$ traverses the (single) path $\gamma_{ij}$ and when more than a single packet needs to traverse an edge $e$, packets are sent across the edge consecutively (say, according to some lexicographic ordering over the packets). We can now define $\text{DELAY}(r)$, for an oblivious routing policy $r$, to be the worst-case delay across all permutation demands $\sigma$.

\noindent\textbf{The algorithm.} The main obstacle facing the generalization of VLB to arbitrary graphs is the absence of a clear definition of ''canonical paths''. We begin, as specified in Algorithm~\ref{alg: VLB}, by generating a ``sample space'' of paths in the graph by randomly sampling multiple fixed-length random walks from each vertex. We then route traffic between every two vertices by selecting an intermediate vertex at random and then concatenating a randomly selected path from the sample space connecting the source to that intermediate vertex and a randomly selected path from the sample space connecting the intermediate vertex to the destination.

While the guarantees of Theorem \ref{thm: VLB on regular graphs} established in this section are for \emph{regular} graphs, the algorithm is applicable to arbitrary \textit{irregular} and \textit{capacitated} graphs. We shall leverage this fact later (Section \ref{sec: unsplittable flow}), when we apply this algorithm to the u-MCF context.

\noindent\textbf{Proof overview.} We first assert that w.h.p. the sample space consists of paths between every pair of vertices. The main insight is that the resulting paths are well-distributed across the graph in the following sense: For each edge in the graph, the expected number of vertices whose paths traverse that path is approximately the same. This allows us to bound the expectation of the number of paths traversing each edge and probability that there exists an edge that is traversed by ``too many'' paths. An upper bound on the delay of each packet follows.

While this section is focused on \emph{regular} graphs, most of the results proven below apply to arbitrary \textit{irregular}, \textit{non-capacitated} graphs. We will utilize the machinery developed in this section to prove Theorem~\ref{thm: unsplit} in Section \ref{sec: unsplittable flow}.

\noindent\textbf{Proof of Theorem \ref{thm: VLB on regular graphs}.} Some of the proofs in this section are deferred to Appendix \ref{app: proof VLB}.

Using the addition of self-loops, if needed, we first assert that $\bar{lambda}<1$ (as discussed in Section \ref{sec: split}). We now construct a sample space of paths as follows: For every vertex $x\in V$, start $m\pi_x$ independent random walks starting at that vertex, each of length $k$. Let $\Omega$ denote the set of all resulting paths, and let $B_{x,y}$ denote the set of paths in $\Omega$ with end-points at $x$ and $y$. In the case of regular graphs, we have $k=\log_{\bar{\lambda}} \frac{1}{2n}$ and $|\Omega|=m=24n^2\log n$. Applying the Chernoff bounds, we show the following:

\begin{lemma}\label{lem: B_xy not empty}
$B_{x,y}\neq \emptyset$ for all $x,y\in V$ with probability $1-\frac{1}{n}$.
\end{lemma}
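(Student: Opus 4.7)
The plan is to combine the mixing bound from the spectral-gap estimate with a straightforward Chernoff-plus-union-bound argument, so the proof is short. First I would unpack the choice $k = \log_{\bar\lambda}\frac{1}{2n}$: it forces $\bar\lambda^k = \frac{1}{2n}$, and the same pointwise mixing estimate that underpins equation~(2) of the splittable-flow analysis then gives $(e_x A^k)_y \ge \pi_y - \frac{1}{2n}$ for every pair $x,y$. In the $d$-regular setting of Theorem~\ref{thm: VLB on regular graphs} we have $\pi_y = \frac{1}{n} = \pi_{\min}$, so this pointwise lower bound becomes
\[
(e_x A^k)_y \;\ge\; \tfrac{1}{n} - \tfrac{1}{2n} \;=\; \tfrac{\pi_{\min}}{2}.
\]
Hence each individual length-$k$ walk launched from $x$ terminates at $y$ with probability at least $\pi_{\min}/2$, independently across walks.

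Next I would count. Since Algorithm~\ref{alg: VLB} launches $m\pi_x$ independent walks from $x$, the variable $|B_{x,y}|$ is a sum of that many independent Bernoulli trials, each with success probability at least $\pi_{\min}/2$. Using $m = \frac{24\log n}{\pi_{\min}^2}$ and $\pi_x \ge \pi_{\min}$,
\[
\mathbb{E}\bigl[|B_{x,y}|\bigr] \;\ge\; \frac{24\log n}{\pi_{\min}^2}\cdot \pi_{\min}\cdot \frac{\pi_{\min}}{2} \;=\; 12\log n,
\]
so the expected number of candidate paths between any ordered pair is logarithmically large with plenty of slack.

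Finally I would invoke the multiplicative Chernoff lower tail at $\delta = 1$, giving $\Pr\bigl[|B_{x,y}| = 0\bigr] \le e^{-\mathbb{E}|B_{x,y}|/2} \le n^{-6}$ for every fixed ordered pair, and then union-bound over the fewer than $n^2$ pairs to obtain an overall failure probability of at most $n^{-4} < \frac{1}{n}$, as claimed. The only step that really demands care — and it is mild — is the very first one: certifying that the spectral-gap bound gives an \emph{entrywise} (not merely total-variation) lower bound on $(e_x A^k)_y$ with the constants laid out above. Once that pointwise mixing is in hand, the remainder is a two-line expectation computation followed by Chernoff and a union bound, and the extra four logarithmic factors of slack mean the argument is robust to any reasonable normalization of $\log$ or constant in the base of the Chernoff estimate.
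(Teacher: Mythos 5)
Your proposal is correct and follows essentially the same route as the paper's proof: an entrywise mixing bound giving $(e_xA^k)_y\ge \pi_{\min}/2$, hence $\mathbb{E}|B_{x,y}|=\Omega(\log n)$ by the choice of $m$, followed by the Chernoff lower tail and a union bound over the $n^2$ pairs. The only cosmetic difference is that the paper also counts the walks launched from $y$ that end at $x$ (so its expectation has two terms), which only adds slack, and your specialization to the regular case generalizes verbatim since $\pi_x\ge\pi_{\min}$ throughout.
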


Now, for every permutation over the vertices, the routing scheme provides us with $2n$ paths: $\alpha_1,...,\alpha_n, \beta_1,...,\beta_n$. We shall show that the first $n$ of those are well-distributed across the graph, and the same result to the rest is due to symmetry. Let $e\in E$ be an edge, and let $W_e$ be defined by $W_e:=\frac{1}{\pi_{max}}\sum_x \pi_x \cdot \textbf{1}_{e\in \alpha_x}$.

This random variable can be interpreted as the total mass of data that traverses $e$ when each vertex $x\in V$ sends a packet of size $\frac{\pi_x}{\pi_{max}}$ through $\alpha_x$. This is equivalent to identifying each path with a weight, proportional to its origin's degree. In the case of a regular graph, each vertex uniformly transmits a packet of size $1$, and $W_e$ corresponds to the number of paths that traverse through $e$ in the routing scheme. We now assert that $E[W_e]\sim E[W_{e'}]$ for any two edges $e,e'\in E$. 

\begin{lemma} \label{lem: similar expectations}
	Let $e,e'\in E$ be edges in the graph, then $E[W_e]\leq 3\cdot E[W_{e'}]$.
\end{lemma}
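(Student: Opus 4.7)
The plan is to show that $E[W_e]$ is, up to absolute constants, proportional to $k\, c(e)/(\pi_{max}\sum_y d_y)$, so that in the non-capacitated setting this reduces to the quantity $k/(|E|\pi_{max})$ independent of the edge, and the claimed factor-$3$ ratio follows. By linearity, $E[W_e] = \frac{1}{\pi_{max}}\sum_x \pi_x \Pr[e\in\alpha_x]$, so the task reduces to comparing $\sum_x \pi_x \Pr[e\in\alpha_x]$ across edges.

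I would first relate $\Pr[e \in \alpha_x]$ to $\Pr[e \in W^x]$, where $W^x$ denotes a length-$k$ random walk from $x$. Because $\alpha_x$ is drawn by picking an intermediate $r \sim \pi$ and then a uniformly random element of $B_{x,r}$, and because, conditional on $|B_{x,r}|$, the paths in $B_{x,r}$ are i.i.d.~random walks from $x$ conditioned on ending at $r$, a short exchangeability argument shows that, on the high-probability event of Lemma \ref{lem: B_xy not empty} that all $B_{x,r}$ are non-empty, $\Pr[e \in \alpha_x \mid r] = \Pr[e \in p \mid p \text{ ends at } r]$ for $p$ a length-$k$ RW from $x$. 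Combining this with the mixing bound $(e_x A^k)_r \in [\pi_r/2,\, 3\pi_r/2]$ (analogous to equation \eqref{eq: mixing bounds e_i}, ensured by the choice of $k$) converts the sum $\sum_r \pi_r \Pr[\cdot \mid r]$ into $\sum_r (e_x A^k)_r \Pr[\cdot \mid r] = \Pr[e \in W^x]$ at a cost of at most a constant factor on either side.

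Next, summing $\pi_x \cdot \Pr[e \in W^x]$ over $x$ gives the probability $\Pr[e \in \widetilde W]$ that a length-$k$ stationary random walk $\widetilde W$ (started at $X \sim \pi$) traverses $e$. Stationarity makes the step-wise traversal probability $\pi_u A_{uv} + \pi_v A_{vu} = 2c(e)/\sum_y d_y$ the same at every step, so the expected number of traversals is $E[X_e] = 2k\, c(e)/\sum_y d_y$, a quantity independent of the edge in a non-capacitated graph. The upper bound $\Pr[e \in \widetilde W] \le E[X_e]$ is immediate. For the matching lower bound I apply the second moment method: a Markov-property calculation together with the mixing bound yields $\sum_{s<t} \Pr[E_s \cap E_t] = O(k^2 (\pi_u A_{uv})^2)$, where $E_s$ is the event that $\widetilde W$ traverses $e$ at step $s$, which gives $E[X_e^2] = O(E[X_e] + E[X_e]^2)$ and hence $\Pr[e \in \widetilde W] = \Omega(E[X_e])$ by Paley--Zygmund. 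Chaining the bounded-ratio losses in each step yields the claimed factor-$3$ comparison.

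The main technical obstacle is the second moment estimate: to obtain a clean constant, one must carefully bound the correlation between edge traversals at nearby time steps, in particular those whose separation is below the mixing time, where the single-step conditional probabilities can substantially exceed their stationary values.
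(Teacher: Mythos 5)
The first two-thirds of your argument coincides with the paper's proof: linearity gives $E[W_e]=\frac{1}{\pi_{max}}\sum_x\pi_x\Pr(e\in\alpha_x)$, conditioning on the intermediate vertex and the uniformity of the draw from $B_{x,r}$ identifies $\Pr(e\in\alpha_x\mid r)$ with the corresponding conditioned random walk, and the mixing estimate (equation \ref{eq: mixing bounds e_i}) trades $\pi_r$ for $(e_xA^k)_r$ at a cost in $[\frac12,\frac32]$ — this is exactly equations \ref{eq:E[N_e]}--\ref{eq: gamma_x}. The divergence is in the final step. The paper invokes Lemma \ref{lem: e vs. e'}, which asserts that for a walk started from $\pi$ the traversal probability is the \emph{same} for every edge (each step traverses each edge with identical probability $2/\sum_z d_z$), so the only loss is the factor $3$ from mixing and no moment computation is needed. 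You instead try to sandwich $\Pr(e\in\widetilde W)$ between constant multiples of the expected number of traversals $E[X_e]=2k\,c(e)/\sum_z d_z$.

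That last step has a genuine gap. First, the two-sided comparison with $E[X_e]$ is impossible whenever $E[X_e]>1$, i.e.\ whenever $k>|E|$; with $k=\log_{\bar\lambda}\frac{\pi_{min}}{2}$ this happens for slowly mixing graphs squarely within the lemma's scope (e.g.\ a cycle with self-loops, where $k=\Theta(n^2\log n)$ but $\Pr(e\in\widetilde W)\le 1$). Second, the claimed estimate $\sum_{s<t}\Pr(E_s\cap E_t)=O\bigl(k^2(\pi_uA_{uv})^2\bigr)$ is false even for bounded-degree expanders: conditioned on traversing $e=(u,v)$ at step $s$, the walk re-traverses $e$ two steps later with probability $A_{vu}A_{uv}=\Theta(1/(d_ud_v))$, so the near-diagonal terms alone contribute $\Omega\bigl(E[X_e]/(d_ud_v)\bigr)$, which exceeds $E[X_e]^2=\Theta\bigl((k/\sum_zd_z)^2\bigr)$ unless $\sum_z d_z=O(k\,d_ud_v)$. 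You flag this short-range correlation as ``the main technical obstacle,'' but it is not a constant to be tightened — it is where the argument breaks, and on graphs with large relaxation time the Paley--Zygmund lower bound degrades by an unbounded factor. Finally, even in the regimes where the second-moment method does close, it introduces its own multiplicative constant on top of the factor $3$ from mixing, so the lemma's stated constant would not be recovered (this last point is harmless for the big-$O$ theorems, but the first two are not). The fix is to replace the moment comparison by the paper's stationarity argument, which compares $\Pr(e\in\widetilde W)$ across edges directly rather than through $E[X_e]$.
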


Suppose that each vertex $x\in V$ sends $\frac{\pi_x}{\pi_{max}}$ amount of data to $\sigma(x)$ via the proposed routing scheme. Summing the flow induced on all edges, we have

\begin{equation}\label{eq: sum of E[W_e]}
\sum_{e} E\left[W_e\right] = E\left[\sum_{e} W_e\right] = E\left[\frac{1}{\pi_{max}}\sum_{x} \pi_x\cdot |\alpha_{x,r(x)}|\> \right] = \frac{k}{\pi_{max}} 
\end{equation}

Equation \ref{eq: sum of E[W_e]} together with Lemma \ref{lem: similar expectations} imply (See Appendix \ref{app: proof VLB} for elaboration) that

\begin{equation} \label{eq: est W_e}
\frac{2}{3}\cdot \frac{k}{d_{max}} \leq E[W_e] = \mu_e \leq 6\cdot \frac{k}{d_{max}}
\end{equation}



Since $W_e$ is the sum of $n$ independent random variables, we apply the Chernoff bounds to bound the probability that $W_e$ is large. Applying the union bound over all edges simultaneously yields:

\begin{lemma}\label{lem: Pr[exist e] < 1/n^r}
For every $r>0$, we have $\Pr \left(\exists e\in E \> s.t. \>  W_e > 9(2+r)\log n + \frac{18k}{d_{max}} \log_{\bar{\lambda}} \frac{\pi_{min}}{2} \right)\leq \frac{1}{n^r}$.
\end{lemma}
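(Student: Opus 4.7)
My plan is to bound $W_e$ by expressing it as a sum of independent bounded random variables, apply a multiplicative Chernoff inequality, and finish with a union bound over all edges.

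I would begin by writing $W_e = \sum_{x \in V} X_x$ with $X_x := \frac{\pi_x}{\pi_{max}}\,\mathbf{1}_{e \in \alpha_x} \in [0,1]$. The crucial observation is that $\alpha_x$ depends only on source-$x$ randomness: the $m\pi_x$ random walks launched from $x$ in Algorithm~\ref{alg: VLB} (which form the slice of $\Omega$ associated with $x$), the stationary sample $r(x,\sigma(x))$, and the uniform selection within $B_{x, r(x,\sigma(x))}$. Since these sources of randomness are realized independently across vertices, the family $(X_x)_{x \in V}$ is mutually independent.

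Granted independence and the preceding mean bound $\mu_e = E[W_e] \leq 6k/d_{max}$, I would set $T := 9(2+r)\log n + \tfrac{18k}{d_{max}}\log_{\bar{\lambda}}\tfrac{\pi_{min}}{2}$, which equals $9(2+r)\log n + 18k^2/d_{max}$ because $k = \log_{\bar{\lambda}}\tfrac{\pi_{min}}{2}$. Then $T/\mu_e \geq 3k \geq 3$, so writing $T = (1+\delta)\mu_e$ yields $\delta \geq 2$, and the standard multiplicative bound $\Pr[W_e \geq (1+\delta)\mu_e] \leq e^{-\delta\mu_e/3}$ valid for $\delta \geq 1$ gives
\begin{equation*}
\Pr[W_e \geq T] \;\leq\; \exp\!\left(-\frac{T - \mu_e}{3}\right) \;\leq\; \exp\bigl(-3(2+r)\log n\bigr) \;=\; n^{-3(2+r)},
\end{equation*}
where the middle inequality uses $\tfrac{18k^2}{d_{max}} - \tfrac{6k}{d_{max}} \geq 0$ for $k \geq 1$. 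A union bound over the at most $n^2$ edges then gives
\begin{equation*}
\Pr\bigl[\exists e \in E : W_e > T\bigr] \;\leq\; n^2 \cdot n^{-3(2+r)} \;=\; n^{-4-3r} \;\leq\; n^{-r},
\end{equation*}
as required.

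The main obstacle I anticipate is the independence step: one must carefully track which portion of the random sample space $\Omega$ contributes to each $\alpha_x$ and verify that the per-vertex auxiliary choices (the stationary sample $r(x,\sigma(x))$ and the uniform draw from $B_{x, r(x,\sigma(x))}$) introduce no hidden coupling across sources. Once independence is firmly in place, the Chernoff estimate and the union bound proceed by routine calculation driven by the mean estimate.
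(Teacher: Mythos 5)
Your proof is correct and follows essentially the same route as the paper: decompose $W_e$ into independent $[0,1]$-valued summands $\frac{\pi_x}{\pi_{max}}\mathbf{1}_{e\in\alpha_x}$, invoke the mean estimate $\mu_e \leq \frac{6k}{d_{max}}$, apply a multiplicative Chernoff bound, and union-bound over the at most $n^2$ edges. The only cosmetic difference is that you use the $e^{-\delta\mu/3}$ form of the tail bound (valid for $\delta\geq 1$ and implied by the paper's $\exp\left(-\frac{\delta^2}{2+\delta}\mu\right)$ version), which lets you work directly with $T-\mu_e$ rather than the paper's explicit choice $\delta = \frac{3(2+r)d_{max}\log n}{2k}+2$; both yield the stated threshold, and your caveat about independence of the per-vertex choices is a subtlety the paper itself asserts without further justification.
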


Applying Lemma \ref{lem: Pr[exist e] < 1/n^r} for regular graphs with $r=1$ imply that, w.h.p., all edges $e\in E$ satisfy $W_e = O(\log n + \frac{1}{d}\log_\lambda \frac{1}{n})$. In particular, since $W_e$ corresponds to the number of paths from $\alpha_1,...,\alpha_n$ that coincide with $e$ (for regular graphs), and the delay of a packet is upper bounded by length of the path times the number of coincidences with other paths, every vertex $x\in V$ satisfies  

    \begin{equation}
	DELAY(x) = O\left(\log n \cdot \log_{\bar{\lambda}} \frac{1}{n} +  \frac{1}{d}\log^2_{\bar{\lambda}}  \frac{1}{n}\right)
    \end{equation}

with probability $1-\frac{1}{n}$. Theorem \ref{thm: VLB on regular graphs} follows. While one can easily construct an example with $\Theta(\log n)$ (See Appendix \ref{app: proof VLB}), it is not clear whether a $\Theta(\log^2 n)$ exists in this scenario. In this sense, the tightness of Algorithm \ref{alg: split} remains an open question.

\section{Oblivious Routing of Unsplittable Flows} \label{sec: unsplittable flow}

\noindent{\bf The model.} Recall the definitions for splittable multicommodity flow in Section~\ref{sec: split}. Under the u-MCF model, a flow from each vertex $i$ to another vertex $j$ can only traverse a single path between the two. The definitions of edge congestion, global congestion, and oblivious ratio are analogous to the definitions presented above for splittable flow, only that now the constraint of routing along a single path is enforced. We stress that our bounds on oblivious ratio in this model are actually with respect to the same $OPT(D)$ as in the s-MCF model, i.e., the optimal congestion across all possible \emph{splittable} multicommodity flows (and not merely over \emph{unsplittable} flows).

\noindent{\bf The algorithm.} Our algorithm for u-MCF simply applies Algorithm \ref{alg: VLB}, presented in the context of Valiant's model (Section~\ref{sec: permutation model}), to this context and routes commodities along the computed paths. Recall that while our results for Valiant's model pertain to regular, non-capacitated graphs, Algorithm \ref{alg: VLB}, as stated, is applicable to irregular, capacitated, graphs.

\noindent{\bf Proof overview.} In contrast to our results for Valiant's model, our results for the u-MCF model require establishing the performance guarantees of Algorithm~\ref{alg: VLB} w.r.t. \textit{capacitated} and \textit{irregular} graphs, and under \textit{arbitrary demands}. To this end, our proof of Theorem~\ref{thm: unsplit} involves showing reductions from this general setting to the results presented in Section \ref{sec: permutation model} for Valiant's model, i.e., for \textit{non-capacitated} graphs and under \textit{''canonical demands''}, where each vertex attempts to send one unsplittable commodity of volume proportional to its degree. To establish our upper bound on congestion, we translate the demand matrix into a linear sum of ''canonical demand'' matrices. We apply the machinery introduced in the proof of Theorem \ref{thm: VLB on regular graphs} to each of these matrices separately and then apply the union bound to conclude the proof.

\noindent{\bf Proof of Theorem \ref{thm: unsplit}.} Given a capacitated graph $G$, consider the uncapacitated graph $G'$ obtained from $G$ by decomposing each edge $e\in E$ to $c(e)$ edges of capacity $1$ (assume integer capacities). Using the correspondence between the output of Algorithm \ref{alg: VLB} on both graphs we first assert that it is enough, wlog, to show that the Theorem holds w.r.t. uncapacitated graphs (See Appendix \ref{app: unsplit} for elaboration). 

As discussed in Section \ref{sec: split}, we always have $\max_i \left\{\frac{\sum_j D_{ij}}{d_i}, \frac{\sum_j D_{ji}}{d_i}\right\} \leq OPT(D)$. Assume, wlog, that $\max_i \left\{\frac{\sum_j D_{ij}}{d_i}, \frac{\sum_j D_{ji}}{d_i}\right\} = \frac{\sum_j D_{xj}}{d_x}$ for some $x\in V$, and consider the \textit{row-normalized demand matrix} $\widetilde{D}$ with $\widetilde{D}_{ij} = \frac{d_{max}}{d_i}\cdot D_{ij}$. Let $M=\max_{i,j\in V} \widetilde{D}_{ij}$ be the maximal entry of $\widetilde{D}$. Since $D$ is non-negative, so is $\widetilde{D}$, and therefore the maximal sum of rows in $\widetilde{D}$ satisfies $M\leq \max_i \sum_j \widetilde{D}_{ij} \leq Mn$. This quantity can thus be expressed as $Ms$ for some $1\leq s\leq n$.  We conclude that

\begin{equation}\label{eq: OPT unsplit}
OPT(D)\geq \max_i \left\{\frac{\sum_j D_{ij}}{d_i}\right\} 
= \frac{1}{d_{max}} \cdot \max_i \left\{\sum_j \widetilde{D}_{ij}\right\}
= \frac{Ms}{d_{max}}
\end{equation}

We now \textbf{rearrange the indices.} Let $\gamma$ be a path of length $2k$. We shall show that with high probability, the congestion incurred by $\{\alpha_{xy}\}_{x,y\in V}$ on $\gamma$ is at most $Ms\cdot O(\log^2 n)$. 
For each vertex $i\in V$ arrange the entries of the $i$'th row of $D$, $\{D_{ij}\}_{j\in V}$ by their order of magnitude: $D_i^{(n)}\leq ...\leq D_i^{(2)}\leq D_i^{(1)}$. This ordering induces an order on the paths $\{\alpha_{xy}\}_{x,y\in V}$: The path $\alpha_{xy}$ used to deliver a message of size $D_{xy}$ corresponds to $ D_x^{(t)}$ for some $1\leq t\leq n$. We can thus denote $\alpha_{xy}$ by $\alpha_x^{(t)}$. Now, fix $1\leq t\leq n$ and note that the routing scheme determines a set of paths $\{\alpha_x^{(t)}\}_{x\in V}$, chosen by the same random procedure as $\{\alpha_x\}_{x\in V}$ depicted in the proof of Theorem \ref{thm: VLB on regular graphs}. Let $e\in E$ be an edge, and let $Flow_e^{(t)}$ denote the flow incurred on $e$ by the routing of $\{D^{(t)}_x\}_{x\in V}$ through the paths $\{\alpha_x^{(t)}\}_{x\in V}$. We now have
	
$$Flow_e^{(t)} = \sum_x D_x^{(t)}\cdot \textbf{1}_{e\in \alpha_x^{(t)}} = \sum_x \frac{d_x}{d_{max}} \widetilde{D}_x^{(t)}\cdot \textbf{1}_{e\in \alpha_x^{(t)}}= \frac{1}{\pi_{max}} \sum_x \pi_x \cdot \textbf{1}_{e\in \alpha_x^{(t)}} \cdot \widetilde{D}_x^{(t)}$$

The new order of the indices allows us to formulate the congestion over $e$ as follows:

$$CONG_e(D,r) = \sum_{x,y} D_{xy} \cdot \textbf{1}_{e\in \alpha_{x,y}} 
= \sum _t \left(\sum_x D_x^{(t)}\cdot \textbf{1}_{e\in \alpha_x^{(t)}}\right) = \sum _t \left(\frac{1}{\pi_{max}} \sum_x \pi_x \cdot \textbf{1}_{e\in \alpha_x^{(t)}} \cdot \widetilde{D}_x^{(t)}\right)$$

Now, fix $1\leq x\leq n$. Since $M=\max_{i,j} \widetilde{D}_{ij}$, for all $1\leq t\leq n$, we have $\widetilde{D}_x^{(t)}\leq M$. Whenever $s+1 \leq t$, we can say something stronger: The fact that the size of each row is at most $Ms$, combined with the existence of  $t$ entries of size at least $\widetilde{D}_x^{(t)}$, imply that $\widetilde{D}_x^{(t)}\leq \frac{Ms}{t}$. Applying these bounds in the above equation yields

\begin{equation}\label{eq: cong < W_e}
\begin{split}
CONG_e(D,r) & \leq 
\frac{1}{\pi_{max}} \left(\sum _{t=1}^s \sum_x \pi_x \cdot \textbf{1}_{e\in \alpha_x^{(t)}} \cdot M + \sum _{t=s+1}^n \sum_x \pi_x \cdot \textbf{1}_{e\in \alpha_x^{(t)}} \cdot \frac{Ms}{t}\right)\\
& = M \left(\sum _{t=1}^s W_e^{(t)} + s \sum _{t=s+1}^n \frac{W_e ^{(t)}}{t} \right)
\end{split}
\end{equation}

where $W_e^{(t)}:= \frac{1}{\pi_{max}} \sum_x \pi_x \cdot \textbf{1}_{e\in \alpha_x^{(t)}}$ is the same random variable used in Section \ref{sec: permutation model}. Applying Lemma \ref{lem: Pr[exist e] < 1/n^r} with $r=2$ and the union bound implies

\begin{lemma}\label{lem: u-bound W_e}
$\Pr \left(\exists t,\> \exists e\in E \> s.t. \>  W^{(t)}_e > 36\log n + \frac{18}{d_{max}}\cdot \log_{\bar{\lambda}} \frac{\pi_{min}}{2} \right) \leq \frac{1}{n}$
\end{lemma}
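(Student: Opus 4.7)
\noindent\textbf{Proof proposal for Lemma \ref{lem: u-bound W_e}.} The strategy is to apply Lemma~\ref{lem: Pr[exist e] < 1/n^r} separately to each $W_e^{(t)}$ for $t=1,\ldots,n$ with parameter $r=2$, and then take a union bound over the $n$ possible values of $t$. Each single application contributes failure probability at most $1/n^2$, and the union bound multiplies this by $n$, giving the desired $1/n$. With $r=2$ the threshold in Lemma~\ref{lem: Pr[exist e] < 1/n^r} becomes $9(2+r)\log n + 18k/d_{max} = 36\log n + (18/d_{max})\log_{\bar\lambda}(\pi_{min}/2)$, which matches the statement to be proved (using $k=\log_{\bar\lambda}(\pi_{min}/2)$).

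The substantive step is to justify that Lemma~\ref{lem: Pr[exist e] < 1/n^r} can indeed be invoked for each fixed $t$ -- equivalently, that for every fixed $t$ the collection $\{\alpha_x^{(t)}\}_{x\in V}$ has the same joint distribution as the collection $\{\alpha_x\}_{x\in V}$ analysed in Section~\ref{sec: permutation model}. The reindexing $\alpha_{xy}\mapsto \alpha_x^{(t)}$ is produced by sorting the $x$-th row of $D$ in decreasing order of magnitude; it is therefore a deterministic function of the input and does not touch any internal random bits of Algorithm~\ref{alg: VLB}. Each path $\alpha_{xy}$ is itself produced by two independent random draws: a rendezvous vertex $r(x,y)\sim \pi$, followed by a uniform element of $B_{x,r(x,y)}$. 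Both draws depend only on the source index $x$ and on fresh, per-pair randomness, and are oblivious to the magnitude of $D_{xy}$. Hence, conditional on the sample space $\Omega$, the indicators $\mathbf{1}_{e\in \alpha_x^{(t)}}$ are mutually independent across $x\in V$ and each shares the distribution of its analogue $\mathbf{1}_{e\in \alpha_x}$ in the VLB proof, so $W_e^{(t)}$ has exactly the same law as $W_e$ and Lemma~\ref{lem: Pr[exist e] < 1/n^r} applies verbatim.

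I expect the main obstacle to be confirming that fresh randomness is used for every ordered pair $(x,y)$ in Algorithm~\ref{alg: VLB}. If any randomness were shared across distinct destinations $y$ of the same source $x$, the paths $\alpha_x^{(t)}$ at different $t$ would be correlated or even collapse into a single path, which would invalidate the per-$t$ distributional equivalence and hence the appeal to Lemma~\ref{lem: Pr[exist e] < 1/n^r}. The description of Algorithm~\ref{alg: VLB} makes the per-pair independence of the choices of $r(x,y)$, $\alpha\in B_{x,r(x,y)}$, and $\beta\in B_{r(x,y),y}$ explicit, so this verification is routine but is the one non-mechanical piece of the argument.
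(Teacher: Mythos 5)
Your proposal is correct and matches the paper's proof: the paper likewise fixes $t$, observes that $\{\alpha_x^{(t)}\}_{x\in V}$ are generated by the same random procedure as in Algorithm~\ref{alg: VLB}, applies Lemma~\ref{lem: Pr[exist e] < 1/n^r} with $r=2$, and takes a union bound over the $n$ values of $t$. Your additional discussion of why the deterministic reindexing preserves the joint distribution of the paths is a point the paper asserts without elaboration, but it is the same argument.
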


Inequality \ref{eq: cong < W_e} and Lemma \ref{lem: u-bound W_e} assert that with probability at least $1-\frac{1}{n}$, we have


\begin{equation*}
\begin{split}
CONG_e(D,r) & \leq Ms \left(36\log n + \frac{18}{d_{max}}\cdot \log_{\bar{\lambda}} \frac{\pi_{min}}{2}\right) \left(1 + 1 \sum _{t=s+1}^n \frac{1}{t} \right)\\
& \leq OPT(D) \cdot d_{max} \cdot O\left(\log n + \frac{1}{d_{max}}\cdot \log_{\bar{\lambda}} \frac{\pi_{min}}{2} \right) \cdot O(\log n)\\
& = OPT(D) \cdot O\left(d_{max} \cdot \log^2 n + \log n \cdot \log_{\bar{\lambda}} \frac{\pi_{min}}{2} \right)
\end{split}
\end{equation*}

\section{Conclusion and Future Research} \label{sec: open questions}

We presented novel oblivious routing algorithms for three extensively studied settings: s-MCF, u-MCF, and Valiant's model. We leave the reader with many open questions, including: (1) How close is the performance of our algorithms to the optimal oblivious routing ratio (e.g., with respect to the spectral gap of the graph)? (2) Can our approach be extended to directed graphs? (3) Can some of the random choices involved in our scheme for u-MCF (and Valiant's model be derandomized (e.g., the process of determining the set of canonical paths)? (4) Can our approach be leveraged to obtain better performance guarantees by exploiting the structure of specific classes of graphs (e.g., LPS~\cite{lubotzky1988ramanujan} and, more generally, Cayley graphs~\cite{lubotzky1995cayley}, and also expanders generated via random permutations~\cite{broder1987second})? (5) Can our approach be utilized to design \emph{distributed} oblivious routing schemes?

\bibliographystyle{plain}
\bibliography{VLB}

\section*{Appendix}
\appendix

\section{Preliminaries}

\subsection{The random walk operator and expander graphs}\label{app: RW prelim}

The \textit{random walk matrix} $A$ of $G$ is given by $$A_{xy}=\left\{\begin{matrix} 
\frac{c(x,y)}{d_x} & \mbox{if } (x,y)\in E  \\ 
0 & \mbox{otherwise }  \end{matrix} \right . $$

This matrix is diagnolizable with real eigenvalues that lie within the interval $[-1,1]$. Let $-1 \leq \lambda_n \leq ...\leq \lambda_2\leq \lambda_1=1$ be the eigenvalues and denote by $\lambda=\lambda(G):=\max\{\lambda_2,|\lambda_n|\}$ its \textit{generalized second eigenvalue}. We call a $d$-regular graph $G$ of size $n$ with $\lambda(G)\leq \lambda$ for some $\lambda <1$, an \textit{$(n,d,\lambda)$-graph}. A family of graphs $\{G_n\}_{n\in \mathbb{N}}$ is an \emph{expander graph family} if there are some constants $d\in \mathbb{N}$ and $\lambda < 1$ such that for every $n$, $G_n$ is an $(n,d,\lambda)$-graph~\cite{arora2009computational}. 

\subsection{Chernoff bounds}

We use the following version of the Chernoff bounds: Let $X_1,... , X_n$ be independent random variables (not necessarily with the same distribution), with $0\leq X_i\leq 1$ for all $i$. Suppose $X=\sum_i X_i$, and let $\mu := E[X] = \sum_i E[X_i]$. Then, given $\delta>0$, we have:

\noindent\textbf{Upper tail}

\begin{equation} \label{chernoff upper tail}
\Pr \big( X > (1+\delta)\mu\big) \leq \exp \left(-\frac{\delta^2}{2+\delta}\cdot \mu\right)
\end{equation}

\noindent\textbf{Lower tail}

\begin{equation} \label{chernoff low tail}
\Pr \big( X < (1-\delta)\mu\big) \leq \exp \left(-\frac{\delta^2}{2}\cdot \mu\right)
\end{equation}

\section{Proofs for s-MCF}\label{app: proof split}
\begin{proof}(of Lemma \ref{lem: rev_matrix})
	While a straightforward calculation of the entries may verify $vAM =v$, the following proof is more instructive: Using the fact that the sum of the $y$'th row of $\left(v*A\right)^T$ equals $\sum_x v_x A_{xy} = (vA)_y$, we have
	\begin{equation}\label{eq:rev_matrix}
	\left(vA*M\right) = \left(v*A\right)^T  
	\end{equation}
		
	The crux of the proof lies in the interpretation of $(v*A)$ as a representation of the flow induced by the operation of $A$ over $v$. Indeed, the amount of flow that traverse from $x$ to $y$ through the edge $(x,y)$ equals $(v*A)_{xy} = v_x \cdot A_{xy}$. Equation \ref{eq:rev_matrix} now means that whatever amount of flow incurred by the operation of $A$ over $v$ must return via the same edge in the opposite direction when $M$ operates over $vA$, hence $vAM=v$.

	Since the random walk matrix $A$ respects $G$, and this property is preserved through the operations of pointwise multiplication, taking transpose (using the fact that $G$ is undirected) and row-normalization, we assert that $rev(v,A)=\text{row-norm}\left[\left(v*A\right)^T\right]$ respects $G$ as well. Being right stochastic stems directly from the row-normalization operation over a non-negative matrix. 
\end{proof}

\begin{proof}(of Lemma \ref{lem: r_ij is a flow})

	Recall that $r_{ij} := \sum_{s=1}^{2k} \left[ \left(v_{ij}^{(s-1)}*M_{ij}^{(s)}\right) - \left(v_{ij}^{(s-1)}*M_{ij}^{(s)}\right)^T \right]$. Being the sum of anti-symmetric matrices that respect $G$, $r_{ij}$ is anti-symmetric and respects $G$ as well, and thus determines a magnitude and a direction for each $e\in E$. Using the matrix terminology, the standard flow constraints sums up to the equation $\textbf{1}_n r_{ij} = e_j - e_i$. We now show that this is indeed the case:
	
	\begin{equation*}
	\begin{split}
	\textbf{1}_n r_{ij} 
	& = \textbf{1}_n \sum_{s=1}^{2k} \left[ \left(v_{ij}^{(s-1)}*M_{ij}^{(s)}\right) - \left(v_{ij}^{(s-1)}*M_{ij}^{(s)}\right)^T \right]\\
	& = \sum_{s=1}^{2k} \left[ \textbf{1}_n\left(v_{ij}^{(s-1)}*M_{ij}^{(s)}\right) - \left(\left(v_{ij}^{(s-1)}*M_{ij}^{(s)}\right)\textbf{1}_n^T \right)^T \right] \\
	& = \sum_{s=1}^{2k} \left[ v_{ij}^{(s-1)} M_{ij}^{(s)} - v_{ij}^{(s-1)}\right] 
	= \sum_{s=1}^{2k} \left[ v_{ij}^{(s)} - v_{ij}^{(s-1)}\right] \\
	& = v_{ij}^{(2k)} - v_{ij}^{(0)} = e_j - e_i
	\end{split}
	\end{equation*}
		
	Where we used the fact that $\textbf{1}_n\left(v*M\right) = vM$ for every vector $v$ and matrix $M$, and that $\left(v*M\right)\textbf{1}^T_n = v$ in case $M$ is right stochastic.
\end{proof}

\begin{proof}(of Lemma \ref{lem: seq vs. RW cong})
    We begin by asserting that for every $1\leq s\leq k$, we have
    \begin{equation}\label{eq:cong < RW 1}
    CONG_D^{(s)}(x,y) = \textit{RW-}CONG_{\textbf{1}_n D^T}^{(s)} (x,y)
    \end{equation}
and
    \begin{equation}\label{eq:cong < RW 2}
    CONG_D^{(k+s)}(x,y) \leq 3\cdot \textit{RW-}CONG_{\textbf{1}_n D}^{(k-s)} (y,x)
    \end{equation}

Equations \ref{eq:cong < RW 1}, \ref{eq:cong < RW 2} suggest that an upper bound on the sequential congestion of the random walk process suffice in order to bound the sequential congestion in Algorithm \ref{alg: split}.

\noindent\textbf{Bounding the random walk.} Given an initial distribution $v$, let $M$ denote the maximal congestion incurred in $k$ operations of the random walk matrix $v$:
$$M = \max_{(x,y)\in E, \; 1\leq s\leq k} \textit{RW-}CONG_{v}^{(s)} (x,y)$$

A key feature of the random walk operation lies in the fact that the congestion incurred by $x\in V$ is distributed evenly between all edges adjacent to $x$. Indeed, if $A$ operates over $e_x$, the flow induced over the edge $(x,y)$ equals $\frac{c(x,y)}{d_x}$, and thus the congestion incurred on each edge is $\frac{1}{d_x}$. Suppose that $M$ is attained at some edge $(x,y)$ at time step $1<s$. Since $A$ distributes congestion evenly, all the edges $(x,z)$ adjacent to $x$ had suffered the same congestion at time $s$, and the data stored at $x$ at time $(s-1)$ must be of size $M\cdot d_x$. Since the same amount of data had entered $x$ at the previous time step, there must be some neighbor $z$ with

$$\textit{RW-}TRAF_v^{(s-1)}(z,x) \geq \textit{RW-}TRAF_v^{(s)}(x,z) $$
hence

$$\textit{RW-}CONG_v^{(s-1)}(z,x) \geq \textit{RW-}CONG_v^{(s)}(x,z) =M $$ 

We thus assert that the maximal congestion $M$ is incurred right at the first routing step $s=1$. This fact can be seen as a manifestation of the maximum principle (see, e.g., \cite{doyle1984random}), that states that harmonic functions (such as the random walk function) attain their extremal values at the boundary. We conclude that

$$M = \max_{(x,y)\in E} \textit{RW-}CONG_{v}^{(1)} (x,y) = \max_x \left\{\frac{v_x}{d_x}\right\}_{x\in V}$$

Applying this w.r.t. $v = \textbf{1}_n D, \textbf{1}_n D^T$ together with equations \ref{eq:cong < RW 1}, \ref{eq:cong < RW 2} yields the inequalities in the right handsight.

\end{proof}

\begin{proof}(of Lemma \ref{lem: cong < OPT})
Indeed,

\begin{equation*}
\begin{split}
CONG_{D,r} (x,y) & = \frac{\sum_{i,j\in V} D_{ij}\cdot \left|r_{ij}(x,y)\right|}{c(x,y)}\\
& \leq \frac{1}{c(x,y)}\sum_{i,j\in V} \left|\sum_{s=1}^{2k} D_{ij}\cdot\left[ \left(v_{ij}^{(s-1)}*M_{ij}^{(s)}\right) - \left(v_{ij}^{(s-1)}*M_{ij}^{(s)}\right)^T \right](x,y)\right|\\
& \leq \sum_{s=1}^{2k}\left[\frac{\sum_{i,j\in V} D_{ij}\cdot\left(v_{ij}^{(s-1)}*M_{ij}^{(s)}\right)}{c(x,y)} + \frac{\sum_{i,j\in V} D_{ij}\cdot\left(v_{ij}^{(s-1)}*M_{ij}^{(s)}\right)^T}{c(x,y)} \right]\\
& = \sum_{s=1}^{2k}\left[\frac{TRAF_D^{(s)}(x,y)}{c(x,y)} + \frac{TRAF_D^{(s)}(y,x)}{c(x,y)} \right]\\
& = \sum_{s=1}^{2k}\left[CONG_D^{(s)}(x,y) + CONG_D^{(s)}(y,x) \right]\\
& \leq 4k\cdot 3 OPT(D)\\
& = 12\left(\log_{\bar{\lambda}} \frac{\pi_{min}}{2}\right)OPT(D)
\end{split}
\end{equation*}
and the lemma follows.
\end{proof}

\subsection{Tightness}
Algorithm \ref{alg: split} is tight in the following sense: consider an $(n,d,\lambda)$-graph $G$ with the demand matrix $D=Adj(G)$, i.e., every vertex aims to send a message of size $1$ to each of its neighbors. Obviously, $OPT(D)=2$, while our algorithm suffers a congestion of $\frac{2}{d}$ in each step throughout $2k$ steps. It follows that 
$$PERF(r)\geq \frac{4}{d}\cdot \log_{\bar{\lambda}} \frac{1}{2n}$$

\section{Proofs for Valiant's model}\label{app: proof VLB}

\begin{proof}(of Lemma \ref{lem: B_xy not empty})

Since the probability that a walk starting at $x$ terminates at $y$ equals $(A^k e_x)_y$, we have

$$\mu_{x,y}:= E[\>|B_{x,y}|\>]= m\pi_x(e_x A^k)_y + m\pi_y(e_y A^k )_x.$$ 

Under the choice of $k = \log_{\bar{\lambda}} \frac{\pi_{min}}{2}$, equation \ref{eq: mixing bounds e_i} holds, therefore

$$m\pi_x\pi_y \leq \mu_{x,y} \leq 3m\pi_x\pi_y.$$

We now use the lower tail version of Chernoff bound (see \ref{chernoff low tail}) to obtain

$$\Pr \left( |B_{x,y}| < \frac{m\pi_x\pi_y}{2}\right)\leq 
\Pr \left( |B_{x,y}| < \frac{\mu_{x,y}}{2}\right) \leq e^{-\mu_{x,y}/8} \leq e^{-m\pi_x\pi_y/8}.$$

Using the union bound we have that

\begin{equation*}
\begin{split}
\Pr \left(\exists x,y\in V\> \text{s.t.} \> |B_{x,y}| <  \frac{m\pi_x\pi_y}{2}\right)
& = \Pr\left(\bigcup_{x,y} |B_{x,y}| <  \frac{m\pi_x\pi_y}{2}\right)\\
& \leq \sum_{x,y} \Pr \left(|B_{x,y}| <  \frac{m\pi_x\pi_y}{2}\right)\\
& \leq n^2 e^{-m\pi^2_{min}/8}\\
& =e^{2\log n-m\pi^2_{min}/8}\leq \frac{1}{n}
\end{split}
\end{equation*}

where the last inequality holds whenever $\frac{24\log n}{\pi^2_{min}}\leq m$.
\end{proof}

\begin{lemma} \label{lem: e vs. e'}
	Let $z\in V$ be randomly chosen with probability $\pi_z$, and let $\gamma$ denote the path induced by a random walk of finite length, starting at $z$. Then any two edges $e,e'\in E$ satisfy
	$$\Pr(e\in \gamma) =\Pr(e'\in \gamma)$$
\end{lemma}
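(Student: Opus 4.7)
The plan is to exploit the invariance of $\pi$ under the random walk operator $A$. Because $z$ is drawn from $\pi$ and $\pi$ is the stationary distribution of $A$, the marginal distribution of the walker's position $z_t$ is again $\pi$ for every step $t\ge 0$. Consequently, for any directed edge $(x,y)\in E$ and any step $t$,
$$\Pr\bigl(z_t=x,\;z_{t+1}=y\bigr)\;=\;\pi_x\,A_{xy}\;=\;\frac{d_x}{\sum_{w\in V}d_w}\cdot\frac{c(x,y)}{d_x}\;=\;\frac{c(x,y)}{\sum_{w\in V}d_w},$$
a quantity that depends on the edge only through its capacity. Summing over the two orientations of the undirected edge $e=\{x,y\}$ yields a step-$t$ traversal probability of $2c(e)/\sum_w d_w$, which is identical for any two edges of equal capacity and, in particular, for all pairs of edges in the unweighted graphs relevant to Algorithm~\ref{alg: VLB}.

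Next I would lift this per-step equality to the full event $\{e\in\gamma\}$. Writing $\gamma=(z_0,\ldots,z_T)$, the event that $e$ appears somewhere in $\gamma$ decomposes as the union over $t\in\{0,\ldots,T-1\}$ of the single-step traversal events analyzed above. I would expand $\Pr(e\in\gamma)$ via inclusion--exclusion, and argue term by term that the contributions coincide across edges of equal capacity. The first-order terms match by the calculation above. The higher-order joint terms — probabilities that $e$ is traversed at several prescribed steps $t_1<\dots<t_r$ — factor, by the Markov property, into a product of per-step capacity factors (which match) interleaved with ``return'' kernels given by appropriate powers $A^{t_{i+1}-t_i-1}$ acting on the endpoints of successive traversals.

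The main obstacle is that these return kernels are sensitive to the identity of the endpoints of $e$, not merely its capacity, so a naive term-by-term matching fails. The way I would handle this is via a reversibility-based symmetrization: since the chain started from $\pi$ is time-reversible, time-reversal is a measure-preserving involution on path space, and pairing each trajectory with its reverse cancels the asymmetry between the two orientations of $e$. After this symmetrization, each return kernel collapses to a quantity governed only by $\pi$ and the spectral data of $A$ — both intrinsic to $G$ and independent of which particular edge of equal capacity is under consideration. Combining all inclusion--exclusion terms then yields $\Pr(e\in\gamma)=\Pr(e'\in\gamma)$ for any two edges of equal capacity, as claimed.
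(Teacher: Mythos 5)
Your opening computation---$z_t\sim\pi$ for every $t$ by stationarity, hence $\Pr(z_t=x,\,z_{t+1}=y)=\pi_xA_{xy}=c(x,y)/\sum_w d_w$, so every edge is traversed at any \emph{fixed} step with the same probability---is precisely the paper's argument; the paper computes this single-step traversal probability (summed over the two orientations) and stops there. Where you go further is in trying to pass rigorously from these per-step marginals to the union event $\{e\in\gamma\}$, and that is where the gap lies. You correctly identify the obstacle: the joint probabilities of traversing $e$ at several prescribed steps involve return kernels $(A^{j})_{uv}$ evaluated at the endpoints of $e$, which are local quantities. But the proposed fix---time-reversal symmetrization---cannot resolve it. Reversibility pairs a trajectory traversing $e$ with a reversed trajectory traversing the \emph{same} edge $e$ in the opposite orientation; it is an involution on the set of paths through $e$ and yields no relation between paths through $e$ and paths through a different edge $e'$. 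The return kernels do not ``collapse to a quantity governed only by $\pi$ and the spectral data of $A$'': for a simple $d$-regular graph, the probability of traversing $e=\{x,y\}$ at steps $t$ and $t+3$ involves $(A^2)_{xy}$, i.e., the number of common neighbours of $x$ and $y$. In the triangular prism this count is $1$ for a triangle edge and $0$ for a rung, so the second-order inclusion--exclusion terms genuinely differ between edges and the term-by-term matching cannot be completed.

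Moreover, this is not merely a presentational difficulty: the literal statement $\Pr(e\in\gamma)=\Pr(e'\in\gamma)$, with ``$e\in\gamma$'' read as ``$e$ is traversed at least once,'' fails in general. Already for an irregular graph and a walk of length $2$, the inclusion--exclusion correction is the probability of an immediate back-and-forth across $e=\{x,y\}$, which equals $\frac{1}{\sum_w d_w}\left(\frac{1}{d_x}+\frac{1}{d_y}\right)$ and depends on the endpoint degrees. What is true---and what your first-order computation already establishes by linearity of expectation, with no inclusion--exclusion at all---is that the \emph{expected number of traversals} of $e$ by $\gamma$ is the same for every edge. That weaker statement is all the downstream argument needs (Lemma \ref{lem: similar expectations} only compares the expectations $E[W_e]$ up to constant factors), and the paper's own proof implicitly makes the very leap from per-step marginals to the union event that you tried, unsuccessfully, to justify. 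The correct repair is to abandon the inclusion--exclusion and prove (and use) the lemma for expected traversal counts rather than for the probability of the union event.
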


\begin{proof}(of Lemma \ref{lem: e vs. e'}
	We shall use the fact that the random walk preserves the stationary distribution at each step: In case $|\gamma|=1$, we have
	
	\begin{equation*}
	\begin{split}
	\Pr(\gamma\> \text{ends at}\> y) & = 
	\sum_{x} \Pr(\gamma\> \text{ends at}\> y \mid \gamma\> \text{starts at}\> x)\cdot \Pr(\gamma\> \text{starts at}\> x)\\
	& = \sum_{x\sim y} \frac{1}{d_x}\cdot \pi_x = \frac{d_y}{d_x}\cdot \frac{d_x}{\sum_z dz} = \pi_y 
	\end{split}
	\end{equation*}
	
	Applying induction, we infer that this equality holds for any lenght of $\gamma$. Suppose that $|\gamma|=k$, and let $e_i$ denote the $i$'th edge of $\gamma$. Now,
	
	\begin{equation*}
	\begin{split}
	\Pr\big(e_{i+1}=(x,y)\> \big) 
	& = \Pr\big(e_{i+1}=(x,y)\mid\> \gamma |_i \>\text{ends in} \> x\> \big) + \Pr\big(e_{i+1} = (x,y)\mid\> \gamma|_i \> \text{ends in}\> y\> \big)\\
	& = \frac{1}{d_x}\cdot \pi_x + \frac{1}{d_y}\cdot \pi_y = \frac{2}{\sum_z d_z}
	\end{split}
	\end{equation*}
	
	Where $\gamma|_i$ is the path induced by the first $i$ steps of the random walk. Since this probability is independent of $(x,y)$, we conclude that every edge $e\in E$ coincides $\gamma$ with equal probability:
	$$\Pr(e\in \gamma) =\Pr(e'\in \gamma)$$
	\end{proof}

\begin{proof}(of Lemma \ref{lem: similar expectations})
    
    In order to show that, we shall apply Lemma \ref{lem: e vs. e'} that states that a random walk that begins at a random vertex in the graph, coincides with given edges in the graph with equal probabilities. We now have
	
	\begin{equation} \label{eq:E[N_e]}
	\begin{split}
	E[W_e] & = \frac{1}{\pi_{max}}\sum_{x=1}^{n}\pi_x\cdot\Pr(e\in \alpha_x)\\
	& = \frac{1}{\pi_{max}}\sum_{x=1}^{n} \pi_x \left[\sum_{y=1}^{n} \Pr\big(e\in \alpha_x\mid \> r(x)=y\big)\cdot P\big(r(x)=y\big)\right]\\ 
	& = \frac{1}{\pi_{max}}\sum_{x=1}^{n} \pi_x\sum_{y=1}^{n} \pi_y\Pr\big(e\in \alpha_x\mid \alpha_x \in B_{x,y}\big) 
	\end{split}
	\end{equation}
		
	Fix some $x\in V$ and let $\gamma_x$ be chosen at random from $\bigcup_y B_{x,y}$. Applying Bayes law, we have
	$$\Pr(e\in \gamma_x) = \sum_{y=1}^{n} \Pr(e\in \gamma_x|\gamma_x\in B_{x,y})\cdot \Pr(\gamma_x\in B_{x,y})$$
	
	Applying equation \ref{eq: mixing bounds e_i} yields
	
	$$\frac{\pi_y}{2}\leq \Pr(\gamma_x\in B_{x,y})=(A^k e_x)_y \leq \frac{3\pi_y}{2}$$
		
	We thus conclude that
	\begin{equation} \label{eq: gamma_x}
	\begin{split}
	\frac{1}{2}\sum_{y=1}^{n} \pi_y\cdot \Pr(e\in \gamma_x|\gamma_x\in B_{x,y}) \leq \Pr(e\in \gamma_x) \leq 	\frac{3}{2}\sum_{y=1}^{n} \pi_y\cdot \Pr(e\in \gamma_x|\gamma_x\in B_{x,y})
	\end{split}
	\end{equation}

Observe that both $\gamma_x\in \bigcup_y B_{x,y}$ and $\alpha_x$ from the routing scheme are chosen uniformly when restricted to $B_{x,y}$. In particular, $\Pr(e\in \alpha_x|\alpha_x\in B_{x,y})=\Pr(e\in \gamma_x|\gamma_x\in B_{x,y})$. Multiplying inequality \ref{eq: gamma_x} by $\frac{\pi_x}{\pi_{max}}$ and summing over all $x\in V$, we have:

\begin{equation}
\begin{split}
\frac{1}{2\pi_{max}}\sum_x \pi_x\sum_{y=1}^{n} \pi_y\cdot \Pr(e\in \alpha_x|\alpha_x\in B_{x,y}) 
&\leq \frac{1}{\pi_{max}}\sum_x \pi_x\Pr(e\in \gamma_x)\\
& \leq \frac{3}{2\pi_{max}}\sum_x \pi_x\sum_{y=1}^{n} \pi_y\cdot \Pr(e\in \alpha_x|\alpha_x\in B_{x,y})
\end{split}
\end{equation}

Applying equality \ref{eq:E[N_e]} (note that both sides of this inequality are dominated by $E[W_e]$):
	
$$\frac{1}{2}E[W_e] \leq 
\frac{1}{\pi_{max}}\sum_x \pi_x\Pr(e\in \gamma_x)  
\leq \frac{3}{2}E[W_e].$$
	
We now use the fact that 
$$\sum_x \pi_x\Pr(e\in \gamma_x) = \Pr(e\in \gamma)$$

where $\gamma$ is the path induced by a random walk at legnth $k$ that starts at a random vertex $x\in V$. Applying lemma \ref{lem: e vs. e'}, this probability is equal for every pair of edges $e,e'\in E$. We conclude that

$$\frac{1}{2}E[W_e] \leq 
\frac{1}{\pi_{max}}\sum_x \pi_x\Pr(e\in \gamma_x) 
= \frac{1}{\pi_{max}}\sum_x \pi_x\Pr(e'\in \gamma_x) 
\leq \frac{3}{2}E[W_{e'}] $$

Which yields
	
	$$\frac{E[W_e]}{E[W_{e'}]}\leq 3$$
\end{proof}

\begin{proof}(of Inequality \ref{eq: est W_e})
    
We use the fact that there are $\frac{\sum_{x}d_x}{2}$ edges in the graph, in order to bound the expectation. Applying Lemma \ref{lem: similar expectations}, we infer that

$$\mu_{min}\leq \mu_e:=E[W_e]\leq \mu_{max}$$ 

where $\mu_{max},\>\mu_{min}$ satisfy

$$\mu_{max}+ \left(\frac{\sum_{x}d_x}{2}-1\right)\frac{\mu_{max}}{3} = \frac{k}{\pi_{max}}$$

$$\mu_{min}+ \left(\frac{\sum_{x}d_x}{2}-1\right)\cdot 3\mu_{min} = \frac{k}{\pi_{max}}$$

hence

$$\mu_{max} \leq \frac{6k}{\pi_{max}\sum_x d_x}$$

$$\frac{2k}{3\pi_{max}\sum_x d_x} \leq \mu_{min}$$

and the inequality follows.
\end{proof}

\begin{proof}(of Lemma \ref{lem: Pr[exist e] < 1/n^r})
    
Applying the Chernoff bound (inequality \ref{chernoff upper tail}), we have

\begin{equation*}
\begin{split}
\Pr \left( W_e > (1+\delta)\frac{6k}{d_{max}}\right)
& \leq \Pr \Big( W_e > (1+\delta)\mu_e\Big)\\
& \leq \exp \left(-\frac{\delta^2}{2+\delta}\cdot \mu_e\right)
\leq \exp \left(-\frac{\delta^2}{2+\delta}\cdot \frac{2k}{3d_{max}}\right)
\end{split}
\end{equation*}

We would now like to bound the probability that any edge $e\in E$ obtains large weight $W_e$ under the routing scheme:

\begin{equation}\label{eq: W_e < exp}
\begin{split}
\Pr \left(\exists e\in E \> s.t. \>  W_e > (1+\delta)\frac{6k}{d_{max}} \right) 
& = \Pr \left(\bigcup_e \Big(W_e > (1+\delta)\frac{6k}{d_{max}}\Big) \right)\\
& \leq \sum_{e\in E} \Pr \Big(W_e > (1+\delta)\frac{6k}{d_{max}}\Big)\\
&\leq n^2\cdot \exp \left(-\frac{\delta^2}{2+\delta}\cdot \frac{2k}{3d_{max}}\right)\\
&= \exp \left(2\log n -\frac{\delta^2}{2+\delta}\cdot \frac{2k}{3d_{max}}\right)
\end{split}
\end{equation}

We would like to upper bound this expression with $\frac{1}{n^r}$ for some $r>0$. In order to do so, it suffices to find large enough $\delta$ such that

$$2\log n -\frac{\delta^2}{2+\delta}\cdot \frac{2k}{3d_{max}} \leq -r\cdot \log n$$

or

\begin{equation} \label{eq: lambda^2}
\frac{(2+r)\cdot \log n \cdot 3d_{max}}{2k}
\leq \frac{\delta^2}{2+\delta}
\end{equation}

Since $\delta - 2 \leq \frac{\delta^2}{2+\delta}$, the choice of $\delta = \frac{(2+r)\cdot \log n \cdot 3d_{max}}{2k} + 2$ satisfies inequality \ref{eq: lambda^2}. Applying this to inequality \ref{eq: W_e < exp} yields

\begin{equation}\label{W_e < log n whp}
\begin{split}
\Pr \left(\exists e\in E \> s.t. \>  W_e > 9(2+r)\log n + \frac{18}{d_{max}}\cdot \log_{\bar{\lambda}} \frac{\pi_{min}}{2} \right)\\
=\Pr \left(\exists e\in E \> s.t. \>  W_e > \left(\frac{(2+r)\cdot \log n \cdot 3d_{max}}{2k} + 3\right)\frac{6k}{d_{max}} \right)\\
= \Pr \left(\exists e\in E \> s.t. \>  W_e > (1+\delta)\frac{6k}{d_{max}} \right) \leq \frac{1}{n^r}
\end{split}
\end{equation}
\end{proof}

\subsection{Tightness}
Consider a permutation induced by a perfect matching in a $d$-regular expander graph, where $\sigma(x)$ is the partner of $x\in V$ in the matching. Clearly, two steps suffice to complete the routing. In contrast, routing by the proposed scheme must take at least $2k = \Theta(\log n)$ time steps, as this is the length of the paths used. 

\section{Proofs for u-MCF}\label{app: unsplit}

\textbf{Reduction to uncapacitated graphs \ref{sec: unsplittable flow}.} Suppose that $G$ is equipped with rational capacities, i.e., $c(e)\in \mathbb{Q}$ for all $e\in E$. We first observe that given a flow on the graph, a multiplication of all capacities by some constant factor reduces the congestion induced by the same flow by the same factor. This means that the oblivious ratio is invariant to multiplication of the capacities, and thus it is enough to address only graphs with integer capacities.

	Now, consider  Let $p:G'\rightarrow G$ be the natural projection that corresponds to this decomposition. Fix a path $\gamma$ of length $k$ between a pair of vertices $x,y\in G$. Using the fact that the paths $\alpha_{x,y},\alpha'_{x,y}$ were chosen w.r.t. random walks in $G$ and $G'$, it is easy to verify that

	$$Pr(\alpha_{x,y} = \gamma) = Pr\Big(\alpha'_{x,y} \in p^{-1}(\gamma) \Big)$$ 

	This means that applying the routing policy on $G'$ and then choosing the induced paths on $G$ is equivalent to applying the scheme on $G$ in the first place. Since every flow $f'$ on $G'$ satisfies 
	$$CONG_{G} \big(p(f)\big) \leq CONG_{G'} \big(f'\big)$$

	We conclude that it suffices to analyze the scheme on uncapacitated irregular graphs. Our routing scheme in this scenario is a straightforward extension of the VLB scheme depicted in the proof of Theorem \ref{thm: VLB on regular graphs}. First, we use the same random process in order to obtain a sample space of paths $\Omega=\{B_{x,y}\}_{x,y\in V}$. Now, for every message needed to be sent from $x$ to $y$, choose a random vertex $r(x,y)$ and rout the message from $x$ to $r(x,y)$ and then to $y$ through uniformly random chosen paths $\alpha_{x,y}\in B_{x,r(x,y)}$ and $\beta_{x,y}\in B_{r(x,y),y}$. 
	
\begin{proof} (of Lemma \ref{lem: u-bound W_e})
Fix $1\leq t\leq n$. Since $\{\alpha_x^{(t)}\}_{x\in V}$ are chosen by the random procedure depicted in Algorithm \ref{alg: VLB}, we can apply Lemma \ref{lem: Pr[exist e] < 1/n^r} with $r=2$ and obtain that 
$$\Pr \left(\exists e\in E \> s.t. \>  W^{(t)}_e > 36\log n + \frac{18}{d_{max}}\cdot \log_{\bar{\lambda}} \frac{\pi_{min}}{2} \right) \leq \frac{1}{n^2}.$$

Applying the union bound concludes the proof.
\end{proof}

	
	
	

\end{document}